\def\nb0{{\mathbf{0}}}
\def\nb1{{\mathbf{1}}}
\def\ncalA{{\mathcal{A}}}
\def\ncalB{{\mathcal{B}}}
\def\ncalC{{\mathcal{C}}}
\def\ncalH{{\mathcal{H}}}
\def\ncalO{{\mathcal{O}}}
\def\ncalS{{\mathcal{S}}}
\def\ncalW{{\mathcal{W}}}
\def\ncalY{{\mathcal{Y}}}
\def\nbbN{{\mathbb{N}}}
\def\nbbR{{\mathbb{R}}}
\newtheorem{theorem}{Theorem}
\newtheorem{example}{Example}
\def\E{\mathbb{E}}
\theoremstyle{plain}
\newtheorem{prop}{Proposition}
\theoremstyle{definition}
\newtheorem{Def}{Definition}
\newtheorem{assum}{Assumption}
\declaretheoremstyle[
  spaceabove=\topsep, spacebelow=\topsep,
  headfont=\normalfont\bfseries,
  notefont=\mdseries, notebraces={(}{)},
  bodyfont=\normalfont,
  postheadspace=1em,
  qed=\qedsymbol
]{mythmstyle}
\declaretheoremstyle[
  spaceabove=\topsep, spacebelow=\topsep,
  headfont=\normalfont\bfseries,
  notefont=\mdseries, notebraces={(}{)},
  bodyfont=\normalfont,
  postheadspace=1em,
  qed=$\diamond$
]{mythmstyle}
\declaretheorem[style=mythmstyle]{remark}
\DeclareMathOperator{\psaa}{\pi_{\text{SAA}}}
\DeclareMathOperator{\pts}{\pi_{\text{TS}}}
\begin{document}

\allowdisplaybreaks

\sloppy

\title{On the Value of Online Learning \\ for Radar Waveform Selection}

\author{Charles E. Thornton, \IEEEmembership{Member, IEEE}, and R. Michael Buehrer, \IEEEmembership{Fellow, IEEE}
\thanks{A preliminary version of this work \cite{thornton2022cognitive} was presented at the 2023 IEEE Radar Conference, San Antonio, TX. C.E. Thornton is with the Virginia Tech National Security Institute, Virginia Tech, Blacksburg, VA, USA, 24061. R.M. Buehrer is with Wireless $@$ Virginia Tech, Department of ECE, Virginia Tech, Blacksburg, VA, USA, 24061. Correspondence: $thorntonc@vt.edu$.}}

\maketitle
\thispagestyle{plain}
\pagestyle{plain}
\vspace{-1cm}
\begin{abstract}
This paper attempts to characterize the kinds of physical scenarios in which an online learning-based cognitive radar is expected to reliably outperform a fixed rule-based waveform selection strategy, as well as the converse. We seek general insights through an examination of two decision-making scenarios, namely dynamic spectrum access and multiple-target tracking. The radar scene is characterized by inducing a state-space model and examining the structure of its underlying Markov state transition matrix, in terms of entropy rate and diagonality. It is found that entropy rate is a strong predictor of online learning-based waveform selection performance, while diagonality is a better predictor of fixed rule-based waveform selection performance. We show that these measures can be used to predict first and second-order stochastic dominance relationships, which can allow system designers to make use of simple decision rules instead of more cumbersome learning approaches under certain conditions. We validate our findings through numerical results for each application and provide guidelines for future implementations.

\end{abstract}

\begin{IEEEkeywords}
cognitive radar, online learning, radar spectrum access, target tracking
\end{IEEEkeywords}

\section{Introduction}
In many practical applications, a radar system's ability to detect, track, or classify targets of interest may be significantly improved by dynamically tailoring the transmitted waveform to complement the state of the physical scene \cite{sira2009waveform,calderbank2009waveform,kershaw1994optimal,gini2012waveform,blunt2016overview}. For example, dynamic waveform selection is particularly relevant for radar scenes which are rapidly varying, heavily cluttered or include dynamic interference from neighboring systems \cite{Sira2007,martone2021closing,kovarskiy2020evaluation}. Such scenarios are notably common in emerging vehicular radar applications \cite{feng2018cognitive,zamiri2022bayesian}. Additionally, if the radar application requires a high resolution, and targets can no longer be well-approximated by point scatterers, waveform diversity can provide substantial benefits in terms of target detection and classification \cite{rihaczek1996principles,bell1993information,Sowelam2000}. Such applications include UAV recognition \cite{wang2021deep}, `vision' for autonomous systems \cite{kim2015firefighting}, and space situational awareness \cite{ender2011radar}.

For any waveform adaptation process to be effective, one must develop a meaningful criterion, or \emph{measure of effectiveness}, by which waveforms may be selected \cite{hero2007foundations,LaScala2004intl,mitchell2018cost}. This presents a significant design challenge, since radar system designers often wish to optimize explicit \emph{task-based} performance measures such as the expected tracking error, probability of target detection, or expected target classification performance \cite{charlish2020implementing}. Furthermore, a reasonable measure of effectiveness must be readily computable on-the-fly for the entire catalog of candidate waveforms.

Unfortunately, these ideal task-based performance measures are frequently non-trivial, or impossible, to compute on-the-fly. Several important works aim to directly quantify the effect of the waveform on tracking performance \cite{kershaw1994optimal,kershaw1997waveform,Sira2007,nguyen2015adaptive}. However, this process becomes cumbersome beyond simple cases. The difficulty is accentuated when the radar scene is rapidly time-varying and prior information is limited, which is the case for practical target tracking problems. Thus, it is often more appropriate to use surrogate measures of effectiveness, such as the estimated SINR/SCNR, information-theoretic quantities, or normalized innovation squared of the tracking filter, which may be directly computed using observed radar measurements. The waveform selection process is then concerned with selecting the waveform which will optimize the surrogate measure of interest using statistical prediction techniques, human-designed rules based on prior design experience, or some combination thereof, given limited feedback from noisy measurements. 


Due to the aforementioned practical challenges, a simple strategy with immediate appeal is to apply preordained, fixed, decision rules. Under this paradigm, certain waveforms are transmitted under channel states which are expected to be complementary. For example, a radar might select waveforms depending on the estimated target position using prior knowledge about the waveforms' ambiguity functions \cite{rihaczek1971radar}. However, such a scheme is limited for two reasons. First, it may be difficult to select decision rules which are effective for all scenarios of interest. Secondly, without extensive prior knowledge regarding quantities as diverse as the target motion model, measurement probabilities, and clutter distribution, it may be difficult to select suitable decision rules, especially if the set of candidate waveforms is very large.

A more nuanced strategy is to frame the waveform selection process as a Partially Observed Markov Decision Process (POMDP) and find an optimal or nearly-optimal waveform selection policy using stochastic optimization techniques \cite{charlish2020implementing,charlish2015anticipation,selvi2020reinforcement,thornton2020deep,ahmad2009optimality}. Indeed, the theory of POMDPs is well-developed, the framework naturally fits the radar tracking problem, and the existence of optimal solutions can be guaranteed. However, POMDPs are in general numerically intractable, and exact solutions require precise knowledge of the environment's time-varying behavior \cite{hero2007foundations}. Several important works have focused on applying structural properties to simplify the POMDP problem \cite{krishnamurthy2009optimal,krishnamurthy2016partially,krishnamurthy2002algorithms}. However, even when the problem is simplified, the computational resources required to solve the POMDP can still be burdensome for realistic problem spaces, and the optimal policy must be computed offline. Further, the POMDP model assumes the radar's actions influence the evolution of future states, which is often not the case for practical waveform selection problems.

Given the practical challenges associated with the POMDP formulation, recent research in cognitive radar has focused on computational techniques to gradually infer, or learn, aspects of the scene in real-time, to dynamically select waveforms with very few restrictive assumptions. In particular, online learning presents a compelling framework for adaptive waveform selection, since the problem of sequential decision making under uncertainty is handled explicitly using well-established mathematics. Perhaps more importantly, practical algorithms with impressive empirical performance and strong theoretical guarantees can be leveraged to drive the decision-making process \cite{thornton2021constrained}. The online learning approach for adaptive waveform selection has been studied in several previous works \cite{thornton2021constrained,thornton2022online,howard2022distributed}, and is shown to be an effective strategy in many scenarios. 

However, the crucial question of whether such learning approaches will \emph{reliably} outperform simpler, fixed rule-based strategies over a broad variety of physical conditions remains understudied in the open literature. This is the question which we pursue here. We acknowledge that a practical characterization of online learning performance for all waveform selection tasks is beyond the scope of a single investigation, but present insight towards the general question by examining compelling performance measures for two broad applications of practical interest, namely radar dynamic spectrum access \cite{martone2021closing,griffiths2014radar,Kirk2019} and multiple target tracking \cite{sira2009waveform,sira2006waveform,Shi2022}. We apply tools from decision theory and information theory to compare the performance of an online learning-based waveform selection algorithm with fixed rule-based strategies. More specifically, we show that the diagonal dominance and entropy rate of the underlying Markov transition model that describes the radar scene can be used to predict when an online learning strategy will stochastically dominate a fixed rule-based approach.

A well-noted concern associated with real-world applications of online learning algorithms is reliability \cite{heger1994consideration,keramati2020being}. While many learning algorithms are designed to perform well asymptotically and on average, oftentimes little is known about the worst-case performance or the entire distribution of possible outcomes. Thus, it is of particular importance to identify scenarios in which the use of learning algorithms is inherently high-risk. In this study, we examine the entire distribution of outcomes associated with particular decision strategies.

Our findings demonstrate that the entropy rate and diagonal dominance of Markov transition matrices can be used as strong general predictors for the performance of learning-based and rule-based waveform selection strategies, respectively. Additionally, we develop a process for establishing an ordering over waveform selection strategies when aspects of the physical scene are known or can be estimated. These results provide insight into the types of target, interference, and clutter behavior which are most and least amenable to the use of learning algorithms.

\subsection{Contributions}
We seek insight towards the general question ``when is online learning-based cognitive radar beneficial for adaptive waveform selection?" by examining the general framework of a closed-loop waveform selection problem and applying tools from stochastic dominance. We examine the performance of a Thompson Sampling based waveform selection scheme relative to a fixed baseline algorithm for two practical scenarios. Thompson sampling is chosen as a representative learning algorithm due to its widespread practical use, and theoretical performance which is close to the lower-bound for multi-armed bandit problems.

First, we examine a dynamic spectrum access scenario in which the radar selects a contiguous group of sub-channels to transmit in during each decision interval. We prove that given enough time, the TS strategy will dominate a fixed rule-based strategy in some sense. 

This paper expands on our preliminary work \cite{thornton2022cognitive} by establishing more general relationships between environmental characteristics and algorithm performance, which are applicable to a broad range of waveform selection and sensor scheduling problems under time-varying conditions. In particular, we consider a new scenario of multiple-target tracking, whereas our previous work \cite{thornton2022cognitive} only considered the dynamic spectrum access scenario. Additionally, this work considers \emph{stochastic dominance} relationships between waveform selection strategies, allowing for a comparison of the entire distribution of outcomes associated with a particular waveform selection policy. Our previous work considers only average performance metrics, which may be ineffective for applications like target tracking, where worst-case performance is often a crucial consideration.

The remainder of the paper is structured as follows. In Section \ref{se:problem}, we present the general state-space model for the cognitive radar problems considered herein. In Section \ref{se:dsa}, we establish tools to analyze the value of online learning algorithms in a dynamic spectrum access scenario. Section \ref{se:tracking} describes a more general scenario, in which the radar tracks multiple extended targets at a time. Section \ref{se:conclusions} provides concluding remarks and insights for practitioners.

\section{Problem Statement and Technical Motivation}
\label{se:problem}

\begin{figure*}
	\centering
	\includegraphics[scale=0.85]{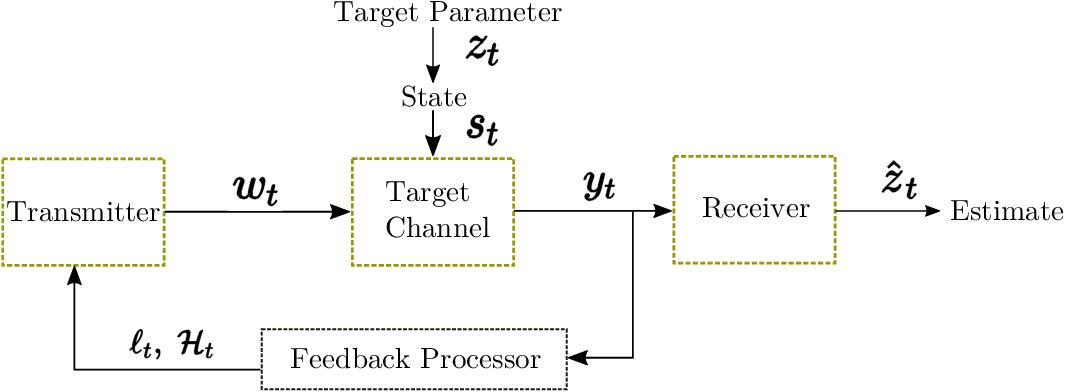}
	\caption{\textsc{Block Diagram} of general closed-loop waveform selection process. Each of the problems considered in this paper can be viewed as a special case of this set-up. Problem specific components include modeling the target channel, designing the feedback processor, and selecting a decision rule for the waveform selection itself.}
	\label{fig:block}
\end{figure*}

In this paper, we consider the following discrete-time waveform selection process, which is illustrated in Figure \ref{fig:block}. Each time step $t= \{1,2,...,n\}$, the radar selects a waveform $w_{t}$ from a finite library of candidates $\mathcal{W}$. The waveform is transmitted over a two-way propagation channel, possibly containing multiple scatterers, which is described by its \emph{state} $s_{t}$. A return signal $y_{t} \in \ncalY$ is then observed at the receiver. The relationship between the transmitted waveform, received signal, and channel state is captured by the two-way target channel probability mass function, 
\begin{equation}
	p(y_{t}|w_{t},s_{t}),
\end{equation} 
which is unknown to the radar \emph{a priori}.

Using information from the received signal $y_{k}$, the receiver produces an estimate $\hat{z}_{t}$ of some unknown target parameter vector $z_{t}$. This vector may contain multiple parameters from multiple targets. The target parameter vector is related to the channel state through the pmf $p(s_{t}|z_{t})$. The quality of the target parameter estimate is captured in an observation $o_{t}$.

Importantly, the true channel state is never directly observed and must be estimated from the history of waveforms and received signals before the current time slot, denoted by the set $\mathcal{H}_{t-1}$. The transmitter's goal is to select waveforms which maximize information about the target process $\{z_{t}\}$ using only the information from $\mathcal{H}_{t-1}$. 

This problem has natural implications for radar, especially in target tracking, which usually involves transmit waveform optimization when the channel is time-varying and unknown. A reasonable strategy is to pose the waveform selection process as a contextual bandit problem, in which various waveforms are selected and losses are observed. However, there are some drawbacks. 

\begin{enumerate}
	\item[(1.)] The period of exploration might result in highly suboptimal performance for a prolonged interval. 
	\item[(2.)] The appropriate choice of loss function and context features may be unclear. 
	\item[(3.)] Enough may be known about the application \emph{a priori} such that a simple rule-based decision scheme can be implemented.
\end{enumerate}

In this paper, we attempt to characterize scenarios of practical interest, in which learning approaches may or may not be appropriate. We discuss the relative merits of learning-based approaches for each application area. We investigate several channel models of interest for radar and communications while staying within the bounds of the general waveform selection problem described above.

\section{Dynamic Spectrum Access}
\label{se:dsa}
\subsection{System Model}
In this section, we consider a finite-horizon, discrete time, dynamic spectrum sharing scenario, where a tracking radar must share a fixed channel of bandwidth $B$ with one or more communications systems. Time is indexed by $t = 1,2,...,n$, where each time step corresponds to a radar pulse repetition interval (PRI). It is assumed that the radar must transmit in every PRI. During each PRI, the radar has access to a noisy observation of the interference in a fixed channel of bandwidth $B$. The shared channel is divided into $N_{SB}$ equally-spaced sub-channels. The radar may transmit in any \emph{contiguous} grouping of sub-channels.

Let the interference state be represented by a $d$-element binary vector expressed by $s_{t} \in \ncalS \subseteq \{0,1\}^{d}$. The frequency content of the radar's waveform is similarly given by the binary vector $w_{t} \in \ncalW \{0,1\}^{d}$. Since the radar is limited to transmitting in contiguous sub-channels, the total number of allowable waveforms is $|\ncalW| = d(d+1)/2$. The radar's observation is an estimate of the current interference state $o_{t} \in \{0,1\}^{d}$. The interference channel behavior is completely specified by initial state $s_{0}$ and transition probability matrix $P$, with elements $p_{ij} = \mathbb{P}(s_{t} = j |s_{t-1} = i) \in [0,1]$. We note that the transition probability matrix, and therefore interference channel behavior, are unknown to the radar \emph{a priori}. Based on the underlying interference state and the choice of radar waveform, the radar receives a real-valued loss at each PRI, denoted by $\ell_{t}$. The exact specification of the loss mapping will be developed below.

\begin{assum}[Markov Interference Channel]
	We assume the interference state generating process can be represented by a stochastic matrix $P: \ncalS \times \ncalS \mapsto [0,1]$ composed of transition probabilities associated with specific state transitions. 
	\label{assum:Markov}
\end{assum}

Under Assumption \ref{assum:Markov}, the radar's choice of waveform does not impact the state transition behavior. However, the state transition model can be generalized to take form $P(s,a,s'): \ncalS \times \ncalA \times \ncalS \mapsto [0,1]$, which is the usual model for a Markov Decision Process (MDP). The MDP model is applied to cognitive radar problems in \cite{charlish2020implementing,selvi2020reinforcement}. The model can also be further generalized to incorporate any $K^{\text{th}}$ order Markov process, as in \cite{thornton2022universal}. However, as the model grows in complexity, it becomes increasingly difficult to estimate $P$ accurately.

\begin{Def}[Diagonal Dominance]
	A transition matrix $P$ is said to be diagonally dominant if the diagonal elements of the transition matrix are larger than the corresponding row sum of off-diagonal elements the matrix. Mathematically, this can be expressed as
	\begin{equation}
		|P(m,m)| \geq \sum_{m \neq n} |P(m,n)| \quad \forall m.
	\end{equation}
\end{Def}

Now that the waveform selection process has been introduced, we describe important events which can be used to evaluate the performance of a waveform selection policy. These events are \emph{collisions} and \emph{missed opportunities}, which are defined as follows: 

\begin{Def}[Collisions]
	Let $C_{t}$ correspond to a `collision' event, in which $w_{t}$ and $s_{t}$ contain energy in overlapping frequency bands during PRI $t$. More precisely, 
	\begin{equation}
		C_{t} = 
		\begin{cases}
			1, & \text{if } \sum_{i=1}^{d}\mathbbm{1}\{w_{t}[i] = s_{t}[i] \} > 0 \\
			0, & \text{if } \sum_{i=1}^{d}\mathbbm{1}\{w_{t}[i] = s_{t}[i] \} = 0.
		\end{cases}
	\end{equation}
	Further, we define $N_{c,t} = \sum_{i=1}^{d}\mathbbm{1}\{w_{t}[i] = s_{t}[i] = 1 \}$ to be the number of colliding sub-bands at PRI $t$.
\end{Def}

\begin{Def}[Missed Opportunities]
	Let $M_{t}$ correspond to a `missed opportunity' event, which occurs whenever the radar does not make use of available sub-channels. More precisely, this event is defined by
	\begin{equation}
		M_{t} = 
		\begin{cases}
			1, & \text{if } |w^{*}| - |w_{t}| > 0 \\
			0, & \text{otherwise} 
		\end{cases}
	\end{equation}
	where $w^{*}$ is the widest available grouping of sub-channels in the current interference state vector $s_{t}$, and $w_{t}$ is a binary vector which identifies the frequency bands utilized by the waveform actually transmitted by the radar at PRI $t$. Further, we define $N^{mo}_{t} = \max{\{|w^{*}| - |w_{t}|,0\}}$ to be the number of available sub-channels missed at PRI $t$.
\end{Def}

We now define the loss function for the learning algorithm $\ell_{t}: \ncalS \times \ncalW \mapsto [0,1]$, which is specified by
\begin{equation}
	\ell_{t} = \left\{
	\begin{array}{lr}
		1, & \text{if } N^{c}_{t} > 0\\
		\eta N^{mo}_{t}, & \text{if } N^{c}_{t} = 0
	\end{array}
	\right\}, 
	\label{eq:loss}
\end{equation}
where $\eta \in [0,1/N]$ is a parameter which controls the radar's preference for bandwidth utilization compared to SIR. This loss function is justified as follows. Collisions are likely to cause a missed detection, numerous false alarms, or performance degradation to neighboring systems. Thus, collisions are to be avoided at all costs. Given that collisions can be effectively avoided, the radar wishes to utilize as much available bandwidth as possible to improve resolution performance and ensure efficient utilization of the available spectrum. 

We note that the loss function expressed in (\ref{eq:loss}) is not necessarily unique. However, this formulation balances the trade-off between the two fundamental errors for this application in a manner that expresses the importance of each. In the numerical results, we note that the value of $\eta$ primarily impacts the radar's behavior at inflection points. For example, when it is unclear whether the radar should attempt to utilize a particular band or not, the value of $\eta$ dictates the radar's level of `risk aversion', effectively speaking. In Section \ref{ss:numerDsa}, we examine the impact of $\eta$ on radar behavior numerically.

\subsection{Waveform Selection Schemes}
The baseline rule-driven adaptive approach considered in this section is the sense-and-avoid (SAA) policy, a form of which is implemented using low-cost hardware in \cite{Kirk2019,kirk2020performance} and is defined here as follows:
\begin{Def}[Sense-and-Avoid Policy]
	Let $\pi_{\text{SAA}}: \ncalO \mapsto \ncalW$ be a fixed decision function which selects the widest contiguous bandwidth available assuming $s_{t} = o_{t-1}$. The decision function is implemented algorithmically as follows. 
	\begin{enumerate}
		\item The position of the longest run of zeros in vector $o_{t-1}$ is found.
		\item Waveform $w_{k}$ is selected such that the radar transmits in the widest available bandwidth.
		\item If there are multiple vacancies of the same length, ties are broken randomly.
	\end{enumerate}	 
	\label{def:saa}
\end{Def}

\begin{Def}[Measure of Diagonal Dominance]
	To quantify the diagonal dominance of a transition matrix $P$, we may compute the sample correlation coefficient $R \in [-1,1]$ between the rows and columns of $P$. Details of this computation are provided in Appendix \ref{se:dominance}.
\end{Def}

\begin{theorem}[Bound on the Expected Loss of SAA]
	We may bound the $n$-step expected loss of the SAA policy as follows
	\begin{align}
		\E[\ell^{n}(\pi_{\text{SAA}})] &  \leq n \sum_{s_{i} \in \mathcal{S}} V_{n}(s_{i}) \sum_{s_{j} \neq s_{i}} P(i,j) \\
									   &  \leq n \sum_{s_{i} \in \ncalS}  V_{n}(s_{i}) [ 1-P(i,i) ]
	\end{align}
	where $V_{n}$ is the expected number of visits to state $s_{i}$ over $n$ time steps. As $n \rightarrow \infty$, given the existence of a stationary distribution on $P$, we have
	\begin{align}
		\E[\ell^{n}(\pi_{\text{SAA}})] &\leq n \sum_{s_{i} \in \ncalS} \mu_{i} (1-p_{ii}) \\
								       & \approx n (1-R) \label{eq:rApprox}
	\end{align}
	where $\{\mu_{i}\}_{s_{i} \in \ncalS}$ is the stationary distribution of the underlying state-generating Markov chain. We note that (\ref{eq:rApprox}) is a valid approximation when the stationary distribution is nearly uniform.
	\label{thm:lossSaa}
\end{theorem}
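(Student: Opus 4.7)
The plan is to bound the per-step expected loss, sum over the horizon, and then pass to the limit using the ergodic theorem for finite-state Markov chains. The core observation is that, assuming a perfect observation $o_{t-1} = s_{t-1}$, the SAA rule selects the widest contiguous vacancy under the hypothesis $s_t = s_{t-1}$. Consequently, whenever the state does not change between PRIs, the chosen waveform produces neither a collision nor a missed-opportunity event and $\ell_t = 0$; whenever the state does change, $\ell_t \leq 1$ by the normalization in (\ref{eq:loss}), irrespective of whether the outcome is a collision ($\ell_t = 1$) or merely a missed opportunity ($\ell_t = \eta N^{mo}_t \leq 1$).

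This pointwise bound yields
\begin{equation*}
\E[\ell_t] \;\leq\; \P(s_t \neq s_{t-1}) \;=\; \sum_{s_i \in \ncalS} \P(s_{t-1} = s_i) \sum_{s_j \neq s_i} P(i,j).
\end{equation*}
Summing from $t=1$ to $n$ and exchanging the order of summation, the time sum $\sum_{t=1}^{n} \P(s_{t-1} = s_i)$ collects into the expected occupancy $V_n(s_i)$ of state $s_i$, producing the first inequality in the theorem. The second inequality is immediate from $\sum_{s_j \neq s_i} P(i,j) = 1 - P(i,i)$.

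For the asymptotic statement I would invoke the ergodic theorem for finite-state Markov chains under Assumption \ref{assum:Markov}: existence of a stationary distribution $\mu$ on $\ncalS$ implies that the time-averaged occupancy converges in the Ces\`aro sense to $\mu_i$, turning the bound into $n \sum_i \mu_i (1 - p_{ii})$. To reach the final approximation $\approx n(1-R)$, I would specialize to the near-uniform regime (as the statement explicitly flags), so that $\sum_i \mu_i(1-p_{ii}) \approx 1 - \Tr(P)/|\ncalS|$, and then appeal to the definition of $R$ in Appendix \ref{se:dominance} as a sample row--column correlation coefficient, under which the normalized trace of $P$ is (up to centering and scaling) exactly the diagonal-mass quantity that $R$ tracks.

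The step I expect to be most delicate is the passage from $\sum_i \mu_i(1-p_{ii})$ to $1-R$: this is only heuristic outside the uniform-stationary regime, and its precise constant depends on the centering and normalization used in the appendix definition of $R$. A secondary issue is the treatment of noisy sensing: the clean argument above uses $o_{t-1} = s_{t-1}$, and if observation noise is permitted one would need to add a term of order $\P(o_{t-1} \neq s_{t-1})$ to the per-step loss bound and propagate it through the time sum, recovering the stated bound in the noiseless limit.
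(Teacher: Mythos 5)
Your proposal is correct and takes essentially the approach the paper intends: the paper gives no explicit proof of this theorem, but the remark immediately following it confirms that the bound rests solely on the fact that $\ell \leq 1$ at state transitions, which is precisely your decomposition into zero loss on self-transitions (under perfect sensing, $o_{t-1}=s_{t-1}$) and loss at most one otherwise, followed by summation over the horizon and the ergodic limit. One observation: your summation actually yields the sharper bound $\sum_{s_i \in \mathcal{S}} V_n(s_i)\,[1-P(i,i)]$ \emph{without} the leading factor of $n$ that appears in the paper's first two displays, and this is the version consistent with the asymptotic line $n\sum_i \mu_i(1-p_{ii})$ once $V_n$ is read as the expected visit frequency rather than the visit count; your caveats about the heuristic passage to $n(1-R)$ (where $R$ measures concentration \emph{near}, not only \emph{on}, the diagonal) and about noisy sensing correctly identify the only steps the paper itself treats as approximate.
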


\begin{figure*}
	\centering
	\includegraphics[scale=0.55]{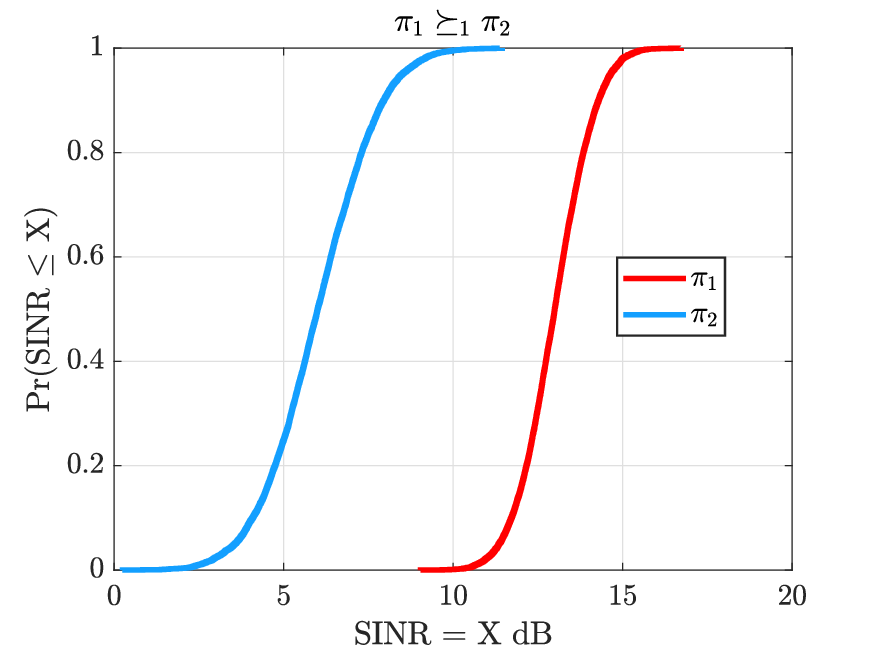}
	\includegraphics[scale=0.55]{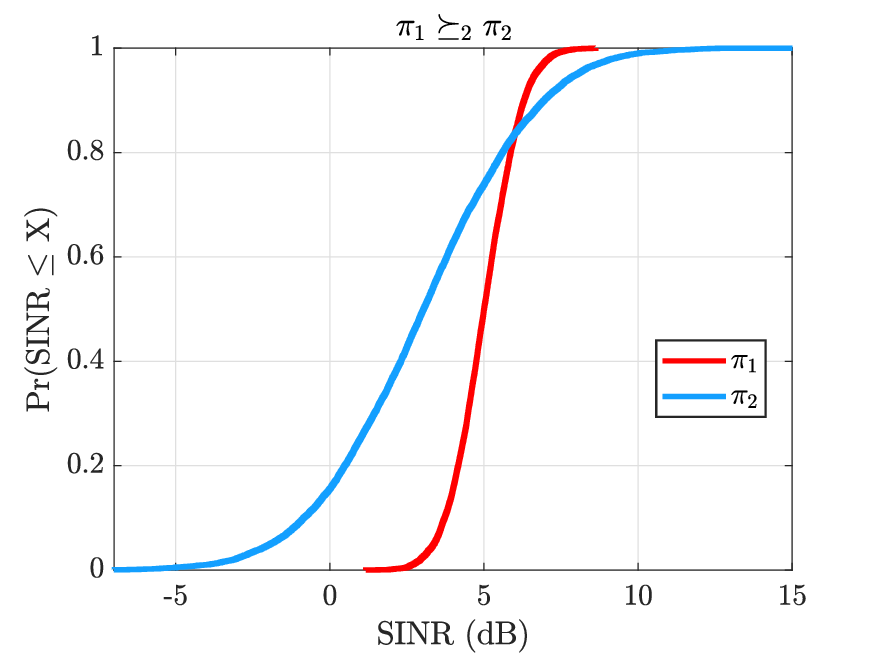}
	\caption{Example of first and second-order stochastic dominance for two waveform selection strategies. The performance metric of interest is observed SINR, and the loss is assumed to be inversely proportional to the SINR.}
	\label{fig:example}
\end{figure*}

Thus, we observe that the performance of SAA is dictated by the values off-diagonal elements of $P$ in states which occur with high frequency (ie. $\mu_{i} >> 0$). We may expect good average performance from $\pi_{\text{SAA}}$ when the proportion of self-transitions is high, or $\sum_{i} \mu_{i} p_{ii} \rightarrow 1$. However, we note the bound in Theorem \ref{thm:lossSaa} is loose, as we have only used the fact that $\ell \leq 1$ to create the bound. In specific cases, the structure of the loss function and transition matrix may be exploited and the approximation can be made tight. The example below demonstrates this.

\begin{example}[Two-state Markov Chain]
	Consider a two-state MC with transition matrix $P$. Let $s_{1} = [1,1,0,0,0]$ and $s_{2} = [1,1,1,0,0]$. Then, under $\psaa$, the state transition $s_{1} \rightarrow s_{2}$ corresponds to a collision and $s_{2} \rightarrow s_{1}$ corresponds to a missed opportunity. We may then compute the n-step loss to be
	\begin{equation}
		\ell^{n}(\psaa) \approx n (\mu_{1} p_{12} + \mu_{2} p_{21} \eta) \quad \text{as} \; \; n \rightarrow \infty
	\end{equation}
\end{example}

A wide variety of learning-based algorithms can be employed for online waveform selection \cite{lattimore2020bandit}. We note that algorithms designed to solve a Markov decision process, such as $Q$-learning or deep $Q$-learning may provide near-optimal performance over a long time horizon. However, the noted issue of sample-efficiency can be a serious practical hurdle \cite{selvi2020reinforcement,thornton2020deep}.

For comparison's sake, we focus predominantly on the Thompson Sampling (TS) approach, which is useful due to the simplicity of implementation, impressive empirical performance, and theoretical performance guarantees. The TS policy involves picking the waveform which has the highest posterior probability of yielding the lowest loss, and was introduced in \cite{thornton2021constrained}. 

\begin{Def}[Thompson Sampling Policy]
	The Thompson Sampling (TS) policy selects the waveform $w^{*}$ which has the highest posterior probability of yielding the lowest loss. More precisely, let $\ncalH$ be the set of waveforms, losses, and observations seen by the radar until PRI $t$, and let $\theta \in \Theta$ be a parameter employed by the radar to account for unknown time-varying effects of the channel. The TS policy then proceeds as follows,
	\begin{multline}
		\pi_{\text{TS}} = \\ \int_{\Theta} \mathbbm{1} \left[\mathbb{E}\left[\ell \mid w^*, o, \theta\right]=\min _{w^{\prime}} \mathbb{E}\left[\ell \mid w^{\prime}, o, \theta\right]\right] P(\theta \mid \mathcal{H}) d \theta
		\label{eq:ts}
	\end{multline}
\end{Def}

Details of a practical algorithm to implement $\pi_{\text{TS}}$ can be found in \cite{agrawal2013thompson}.

\subsection{Performance Characterization}

\begin{Def}[Statewise Dominance]
	A waveform selection policy $\pi_{1}$ is said to dominate policy $\pi_{2}$ \emph{statewise} if $\ell(\pi_{1}(s)) \leq \ell(\pi_{2}(s))$ for every $s \in \ncalS$.
\end{Def}

\begin{remark}
	Utility maximizing strategies, such as the Bellman optimal policy specified in (\ref{eq:Bellman}), often sacrifice performance in states that occur with low probability, in order to improve performance in states that occur with high probability. Thus, we argue that while statewise dominance is a strong condition, it is not always the most meaningful notion of dominance for cognitive radar problems.  
\end{remark}

\begin{Def}[Long-Term Average Dominance]
	Let $\lambda_{\pi_{1}}$ be the long-term average cost associated with acting according to policy $\pi_{1}$ in some fixed environment $E$, specified by Markov transition kernel $P$. Policy $\pi_{1}$ is said to dominate $\pi_{2}$ on average if $\lambda_{\pi_{1}} < \lambda_{\pi_{2}}$.
\end{Def} 

\begin{Def}[First-Order Stochastic Dominance (FOSD)]
	Policy $\pi_{1}$ is said to be first-order stochastic dominant over policy $\pi_{2}$, or $\pi_{1} \succeq_{1} \pi_{2}$, if 
	\begin{equation}
		P(\ell(\pi_{1}) \geq x) \leq P(\ell(\pi_{2}) \geq x)
	\end{equation}
   for every $x \in \mathbb{R}$ and $P(\ell(\pi_{1}) \geq x) < P(\ell(\pi_{2}) \geq x)$ for some $x$. Since we are using losses as a performance measure, we apply the convention that smaller outcomes are preferable. An example of first-order stochastic dominance is seen in Figure \ref{fig:example}.
\end{Def}

\begin{Def}[Second-Order Stochastic Dominance (SOSD)]
	Let $F_{1}$ be the CDF of a performance statistic, namely here the observed loss as specified in (\ref{eq:loss}), under policy $\pi_{1}$ and likewise $F_{2}$ be the cdf under $\pi_{2}$. Then $\pi_{1}$ is said to second-order dominate $\pi_{2}$ if 
	\begin{equation}
		\int_{-\infty}^x\left[F_1(t)-F_2(t)\right] d t \geq 0 
	\end{equation}
	for all $x \in \nbbR$ with strict inequality at some $x$.
\end{Def}

We note that second-order stochastic dominance is a necessary condition for first-order stochastic dominance. Thus, first-order stochastic dominance is a stronger condition. It has been shown that any risk-averse rational decision maker prefers policy $\pi_{1}$ to $\pi_{2}$ when $\pi_{1} \succeq_{2} \pi_{2}$. Thus, 

\begin{theorem}
	For any time-varying Markov interference channel, namely whenever some non-diagonal elements of $P$ are less than one, $\pi_{\text{TS}}$ weakly dominates dominates $\psaa$ in the long-term average sense. Further, as $t \rightarrow \infty$, $\pts$ dominates $\psaa$ in the second-order stochastic sense.
\end{theorem}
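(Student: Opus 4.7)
The plan is to compare both policies to the Bayes--optimal per--observation policy $\pi^{\star}$ defined by $\pi^{\star}(o) \in \arg\min_{w \in \ncalW} \E[\ell(s_t,w)\mid o_{t-1}=o]$. Thompson Sampling's standard asymptotic optimality will force $\pts$ to track $\pi^{\star}$, while a direct context--by--context comparison handles $\pi^{\star}$ versus $\psaa$. Both conclusions of the theorem then fall out of the relationship $\lambda_{\pts}\to\lambda_{\pi^{\star}}\le\lambda_{\psaa}$.

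\textbf{Step 1: benchmark comparison.} Because $\pi^{\star}(o)$ attains the conditional minimum, $\E[\ell\mid o,\pi^{\star}(o)]\le\E[\ell\mid o,\psaa(o)]$ pointwise in $o$. Under Assumption~\ref{assum:Markov} the state chain has a unique stationary distribution and the induced observation chain has stationary distribution $\nu$, so averaging yields
\begin{equation*}
\lambda_{\pi^{\star}} = \sum_{o}\nu(o)\E[\ell\mid o,\pi^{\star}(o)] \;\le\; \sum_{o}\nu(o)\E[\ell\mid o,\psaa(o)] = \lambda_{\psaa}.
\end{equation*}
Whenever $P$ is not the identity, there is a context that occurs with positive stationary probability in which $\psaa$'s implicit hypothesis $s_t=o_{t-1}$ is violated with positive probability, and the inequality is strict.

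\textbf{Step 2: TS tracks $\pi^{\star}$.} Invoking the sublinear Bayesian regret guarantee for contextual Thompson Sampling (e.g.\ \cite{agrawal2013thompson,thornton2021constrained}),
\begin{equation*}
\E\Bigl[\sum_{t=1}^{n}\bigl(\ell_t(\pts)-\ell_t(\pi^{\star})\bigr)\Bigr] = o(n),
\end{equation*}
dividing by $n$ gives $\tfrac1n\E[\sum_{t\le n}\ell_t(\pts)]\to\lambda_{\pi^{\star}}$. Combined with Step~1 this is exactly long--term--average weak dominance of $\psaa$ by $\pts$.

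\textbf{Step 3: from average to SOSD.} Let $\hat L_n(\pi)=\tfrac1n\sum_{t\le n}\ell_t(\pi)$ denote the empirical average loss. Ergodicity of the $\psaa$-driven chain gives $\hat L_n(\psaa)\to\lambda_{\psaa}$ almost surely, so its CDF $F_n^{\psaa}$ converges pointwise to the step function at $\lambda_{\psaa}$. The same ergodicity together with the $o(n)$ regret bound yields $\hat L_n(\pts)\to\lambda_{\pi^{\star}}$ in probability, so $F_n^{\pts}$ converges to the step function at $\lambda_{\pi^{\star}}\le\lambda_{\psaa}$. The limiting step functions therefore satisfy $F^{\pts}_{\infty}(x)\ge F^{\psaa}_{\infty}(x)$ for every $x$, which is even FOSD; integrating from $-\infty$ gives the required $\int_{-\infty}^{x}[F^{\pts}_\infty(t)-F^{\psaa}_\infty(t)]\,dt\ge 0$, hence SOSD in the $t\to\infty$ limit.

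\textbf{Main obstacle.} Step~1 and Step~3 are essentially bookkeeping once Step~2 is in hand, so the technical heart is justifying the sublinear regret guarantee in this non--i.i.d., Markov--context setting. Standard TS analyses assume i.i.d.\ contexts, while here the context $o_{t-1}$ is generated by a hidden Markov model whose mixing time depends on $P$. Establishing $o(n)$ regret therefore requires either a mixing--time argument that reduces the analysis to a near--i.i.d.\ regime across epochs, or a direct martingale treatment that controls the dependence introduced by both the chain and TS's own adaptive sampling. Once that regret bound is secured, the SOSD conclusion is an almost mechanical consequence of the ergodic theorem and the ordering of the limits $\lambda_{\pi^{\star}}\le\lambda_{\psaa}$.
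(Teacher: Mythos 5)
Your proposal is correct in substance and rests on the same engine as the paper's own proof: SAA accrues regret linearly in the horizon (one suboptimality-gap penalty per interference state transition, and transitions occur at a positive rate whenever the chain is not frozen), while Thompson Sampling's regret is sublinear, of order $\tilde{\mathcal{O}}(d^{3/2}\sqrt{n})$, so the time-averaged losses must eventually order themselves. Where you genuinely add something is in Steps 1 and 3. The paper never introduces an explicit benchmark; you route both policies through the Bayes-optimal per-observation policy $\pi^{\star}$, which turns ``weak dominance'' into the explicit chain $\lambda_{\pts}\to\lambda_{\pi^{\star}}\le\lambda_{\psaa}$ rather than an assertion. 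More importantly, the paper's appendix does not argue the second-order stochastic dominance claim at all --- it stops at asymptotic average dominance and concedes that stochastic dominance at finite horizon would need ``precise specification of $K$, $T$, and $P$.'' Your Step 3, which uses ergodicity to collapse the CDFs of the time-averaged losses to step functions at $\lambda_{\pi^{\star}}$ and $\lambda_{\psaa}$ and then reads off FOSD (hence SOSD) of the limits, supplies the missing argument, with the caveat that it establishes dominance for the distribution of the \emph{empirical average} loss rather than of the per-step loss that the paper's SOSD definition nominally references. Finally, the obstacle you flag --- that off-the-shelf TS regret bounds presume i.i.d.\ (or adversarially chosen but realizable) contexts, whereas here the context is emitted by a hidden Markov chain --- is real, but it is a gap shared by the paper's own citation of the $\tilde{\mathcal{O}}(d^{3/2}\sqrt{n})$ bound, not one you introduce.
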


\begin{proof}
	See Appendix
\end{proof}

\begin{remark}
	For time-varying Markov interference channels that are quickly varying, namely, when the diagonal elements of $P$ are close to zero,
\end{remark}

\begin{Def}[Entropy Rate]
Given state transition probability matrix $P$, the entropy rate of the state generating process is
\begin{equation}
	H(S) = -\sum_{i,j} \mu_{i} P(i,j) \log P(i,j),
\end{equation}
where $\mu$ is the stationary distribution of the underlying Markov chain.
\end{Def}

From the definition of missed opportunities and collisions, we note that for a fixed decision rule such as SAA, both events can be associated with specific state transition events. Given that the SAA policy is a fixed decision rule, the frequency of these events can be identified by the following decomposition of $P$:

\begin{prop}[Structure of Transition Matrix]
	Define three sets $\ncalA$, $\ncalB$, and $\ncalC$. Set $\ncalA$ consists of transitions $t_{ij}$ such that $|\pi_{\text{SAA}}(s_{i})| < |\psaa(s_{j})|$ and such that $|\psaa(s_{i})| = |\psaa(s_{j})|$ but $\psaa(s_{i}) \neq \psaa(s_j)$. In other words, the set of transitions such that acting on the SAA policy will produce a collision. Set $\ncalB$ consists of the transitions $t_{ij}$ such that $|\psaa(s_{i})| > |\psaa(s_{j})|$. Finally, set $\ncalC$ consists of $t_{ij}$ such that $\pi_{\text{SAA}}(s_{i}) = \psaa(s_{j})$.
\end{prop}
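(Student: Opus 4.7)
The plan is to read the proposition as two claims packaged together: (i) the three sets $\ncalA$, $\ncalB$, $\ncalC$ partition the collection of ordered pairs $(s_i,s_j)$ that index entries of $P$, and (ii) this partition coincides with the partition of transitions according to the outcome an SAA step produces, namely collision, missed opportunity (without collision), and error-free transmission. Taken together, these two facts turn the entropy-rate calculation $H(S) = -\sum_{i,j}\mu_i P(i,j)\log P(i,j)$ into a quantity that can be split across the three regimes of interest, which is what the subsequent analysis in the section needs.

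First I would dispatch the easy half, disjointness and exhaustiveness, purely from the definitions. A pair in $\ncalC$ satisfies $\psaa(s_i)=\psaa(s_j)$, which forces agreement of both support and size and therefore excludes membership in $\ncalA$ (which requires either a strict size mismatch or a position mismatch at equal size) and in $\ncalB$ (which requires a strict size mismatch in the other direction). Disjointness of $\ncalA$ and $\ncalB$ follows because their defining size comparisons are incompatible. For exhaustiveness I would appeal to trichotomy: for any transition exactly one of $|\psaa(s_i)|<|\psaa(s_j)|$, $|\psaa(s_i)|=|\psaa(s_j)|$, $|\psaa(s_i)|>|\psaa(s_j)|$ holds; the first case lies in $\ncalA$, the third in $\ncalB$, and the middle one splits into $\ncalA$ or $\ncalC$ according to whether the chosen supports differ.

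Next I would verify the event correspondence by a finite case analysis on the geometric configurations of the two binary occupancy vectors on a length-$d$ grid, using the assumption $o_{t-1}=s_{t-1}$ implicit in the analysis. For $\ncalA$ in the equal-size/different-location sub-case, the SAA support cannot coincide with the widest vacancy of $s_j$, so it must intersect at least one occupied bin of $s_j$, yielding $C_t=1$. For $\ncalA$ in the $|\psaa(s_i)|<|\psaa(s_j)|$ sub-case, the support of $\psaa(s_i)$ is a strictly shorter run than the widest vacancy of $s_j$ and, because it was chosen as a widest run of zeros in $s_i$, a careful look at the occupancy patterns shows that the residual pattern in $s_j$ must be incompatible with the old support, forcing a collision. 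For $\ncalB$ the radar has selected a waveform strictly wider than any vacancy in $s_j$, which mechanically cannot fit and which I would reclassify, depending on how collisions and missed opportunities are jointly counted, so that the event labels align with the loss defined in~(\ref{eq:loss}). For $\ncalC$ the SAA decision remains the widest vacancy in $s_j$, so $N^c_t = N^{mo}_t = 0$.

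The main obstacle is the semantic step for $\ncalA$ in its first sub-case and for $\ncalB$: the written definitions do not by themselves guarantee a collision or a missed opportunity without some additional bookkeeping about the spatial relationship between the two occupancy vectors and the tie-breaking convention of $\psaa$. I would discharge this by fixing a canonical tie-breaking rule for $\psaa$ (so that $\psaa(s_i)=\psaa(s_j)$ is an unambiguous predicate on states rather than on observation histories), and then tabulating the possible interactions of the two binary vectors restricted to the support of $\psaa(s_i)$. Once this finite table is filled in, the three-set decomposition and its semantic labels both drop out, and the entropy-rate weighting $\sum_{i,j}\mu_i P(i,j)\log P(i,j)$ can be split as $H_\ncalA(S)+H_\ncalB(S)+H_\ncalC(S)$, which is the form the later performance bounds will consume.
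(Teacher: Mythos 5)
The paper supplies no proof of this proposition --- it is really a definition of a decomposition --- so your attempt is necessarily a reconstruction, and the partition half of it (trichotomy on $|\psaa(s_i)|$ versus $|\psaa(s_j)|$, with the equal-size case split by whether the supports agree, modulo a canonical tie-break) is fine. The genuine gap is in the semantic half, where you assert an implication that is false. For the sub-case $|\psaa(s_i)| < |\psaa(s_j)|$ you claim the occupancy patterns ``force a collision.'' They do not: a strictly shorter contiguous run can sit entirely inside the wider vacancy of $s_j$, producing no collision at all but a missed opportunity, since $N^{mo}_t = |\psaa(s_j)| - |\psaa(s_i)| > 0$ by definition. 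The paper's own two-state example exhibits exactly this: for $s_2 = [1,1,1,0,0] \to s_1 = [1,1,0,0,0]$ the radar transmits in bands $\{4,5\}$, which are vacant in $s_1$, and the paper labels this transition a missed opportunity even though $|\psaa(s_2)| = 2 < 3 = |\psaa(s_1)|$. Symmetrically, $\ncalB$ (the case $|\psaa(s_i)| > |\psaa(s_j)|$) is the set of \emph{guaranteed} collisions --- a waveform wider than every vacancy of $s_j$ cannot avoid an occupied bin --- not a set to be ``reclassified'' toward missed opportunities as you suggest; the same example's $s_1 \to s_2$ transition, with $|\psaa(s_1)| = 3 > 2 = |\psaa(s_2)|$, is the one the paper calls a collision.

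In other words, the event labels attached to $\ncalA$ and $\ncalB$ in the statement are inconsistent with what the size comparisons actually imply (and with Example 1 and the conclusion's summary of ``collision, missed opportunity, and successful transition'' subsets). A correct argument must either prove the corrected correspondence --- $\ncalB$ and the equal-size clause of $\ncalA$ yield collisions; the strict-inequality clause $|\psaa(s_i)| < |\psaa(s_j)|$ always yields a missed opportunity and only sometimes an additional collision, depending on the positions of the runs; $\ncalC$ yields neither --- or explicitly note that the statement's labels are transposed. Your treatment of $\ncalC$ and of the equal-size clause (under a unique maximal vacancy) is sound, but as written the central claim that $\ncalA$ is exactly the collision set would not go through; the closing entropy-rate split $H_{\ncalA}+H_{\ncalB}+H_{\ncalC}$ is also an addition of yours that the paper does not use.
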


\begin{remark}[Effectiveness of SAA]
	If $s_{t} = s_{t-1}$ and $o_{t} = o_{t-1} = s_{t}$, then transmitting $w_{t} = \pi_{\text{SAA}}(o_{t-1})$ during each PRI results in zero collisions and zero missed opportunities. Equivalently, the SAA strategy is effective only when the diagonal elements of $P$, $p_{ii}$, are close to one, and when the probability of being in set $\ncalC$ is large.
\end{remark}

\begin{figure*}
	\centering
	\includegraphics[scale=0.55]{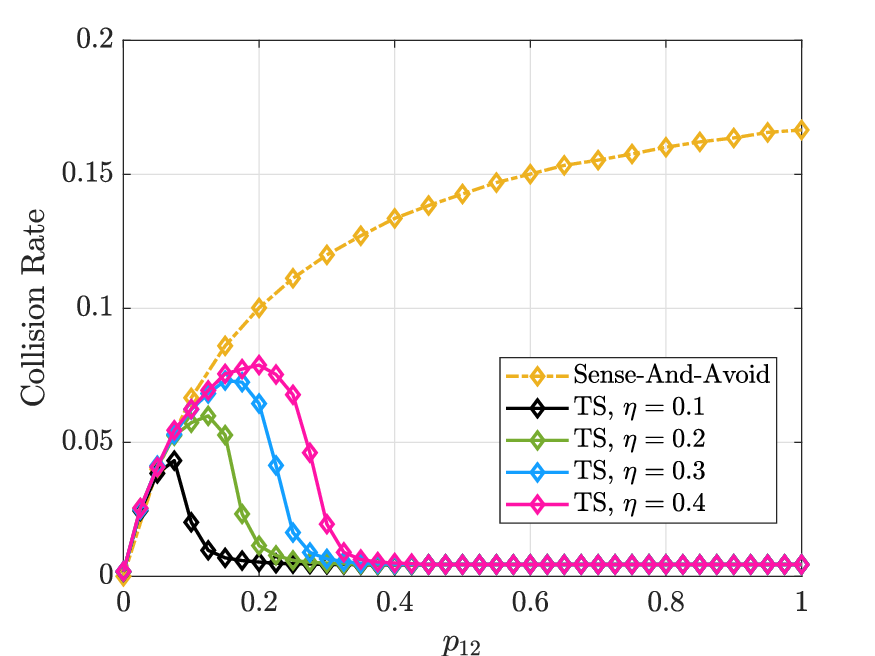}
	\includegraphics[scale=0.55]{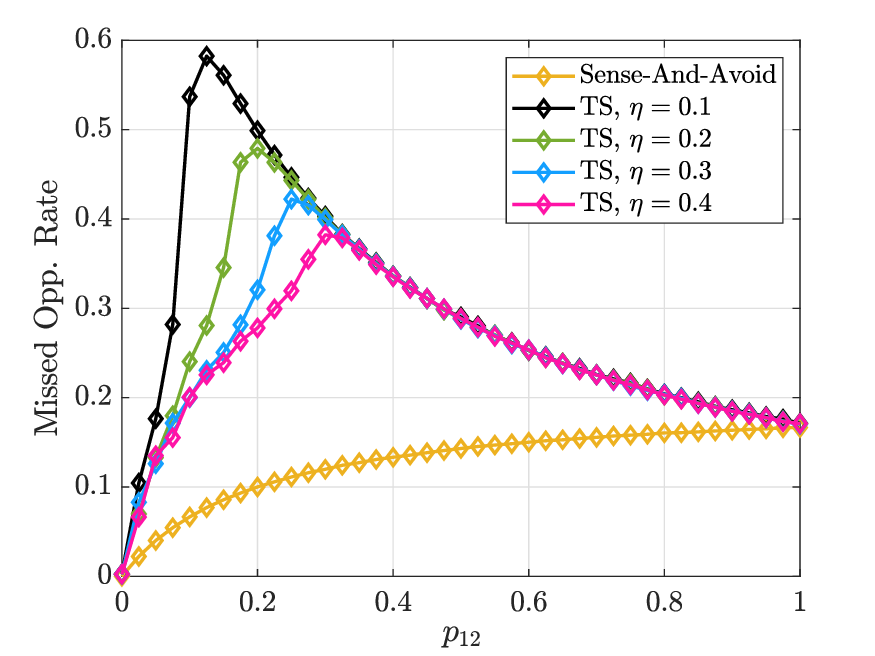}
	\caption{Analysis of the collision and missed opportunity rates as transition probability $p_{12}$ is varied. We observe that as the interference becomes increasingly dynamic ($p_{12} \rightarrow 1$), the learning-based approaches definitively outperform SAA over a time horizon of $n = 1e4$.}
	\label{fig:firstCase}
\end{figure*}

\begin{figure*}
	\centering
	\includegraphics[scale=0.55]{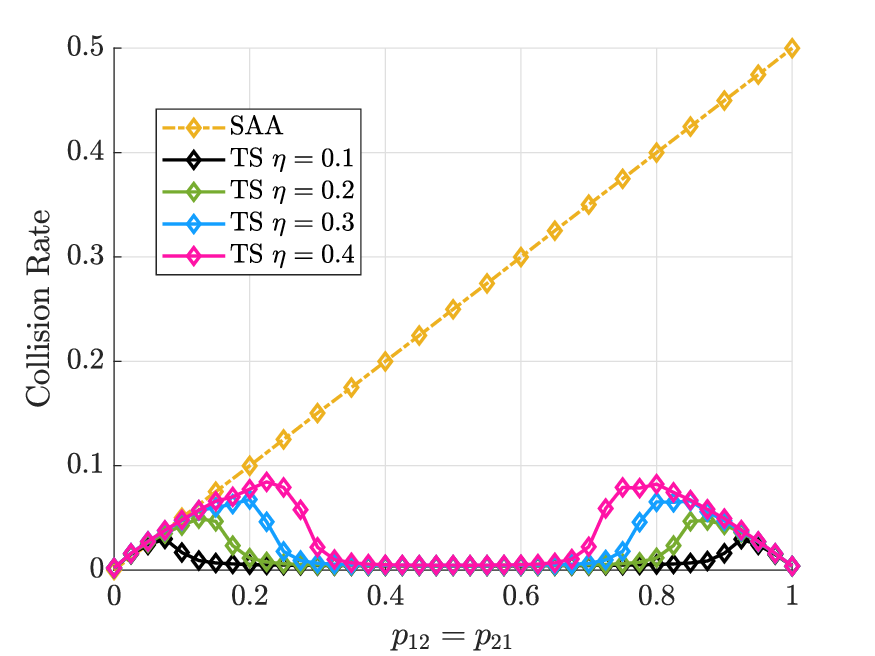}
	\includegraphics*[scale=0.55]{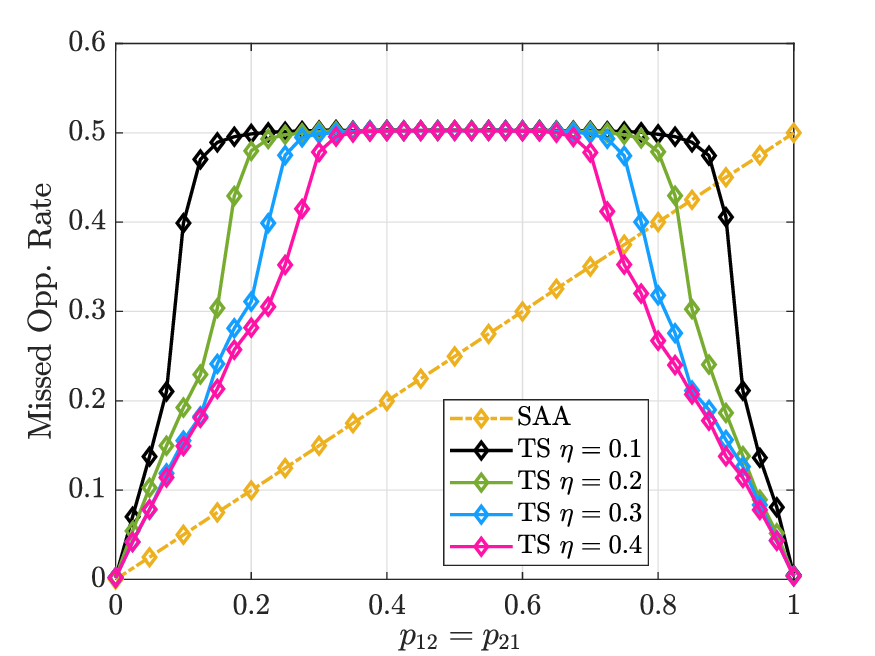}
	\caption{Collision and missed opportunity rates as transition probabilities $p_{12}$ and $p_{21}$ are jointly varied. As the interference becomes increasingly dynamic and deterministic ($p_{12} = p_{21} \rightarrow 1$), the learning-based approaches definitively exceed SAA in performance.}
	\label{fig:joint}
\end{figure*}

\begin{remark}[Effectiveness of Learned Policies]
	Learned policies will be effective under the following conditions: 1. Long time horizon $n \rightarrow \infty$, 2. Many elements of $P$ close to $0$ or $1$ (nearly deterministic), and 3. When the probability of being in set $\ncalA$ or $\ncalB$ is large.
\end{remark}

Having generally identified under which conditions the SAA and TS policies will be effective, we now move towards a more specific characterization, employing tools from dynamic programming and statistical decision theory.
\begin{Def}[Optimal Long-Term Average Cost]
	Given an initial state $s_{0}$, the optimal long-term average cost achievable by a policy is given by
	\begin{align}
		&\lambda^*\left(s_{0}\right) = \nonumber\\
		&\inf _{\pi \in \Pi} \limsup _{n \rightarrow \infty} \mathrm{E}_\nu\left[\frac{1}{n} \sum_{t=1}^n \ell\left(s_t, w_t \right) \mid s_{1}^{t-1}, w_{1}^{t-1}\right],
	\end{align}
	where the infimum is taken over the set of all admissible decision rules $\Pi$.
\end{Def}

A policy which achieves the optimal long term average cost is known as the Bellman optimal policy, which is not necessarily unique.
\begin{Def}[Bellman Optimal Policy]
	\begin{align}
		\label{eq:Bellman}
		\pi^{*} = \min _{w_B} \sum_{s_{t+1}} P &\left(s_{t+1} \mid s_{1}^{t}\right) \\ \nonumber
		& \times\left[\ell\left(s_t+1, w_B \right)+\alpha \ell\left(s_1^{t}, w_1^{t}\right)\right],
	\end{align}
	where $\alpha \in [0,1]$ is a weighting factor called the \emph{discount rate}.
\end{Def}

\subsection{Numerical Results}
\label{ss:numerDsa}
In these simulations, we examine the performance of the Thompson Sampling-based waveform selection policy with an uninformative Gaussian prior, implemented using the algorithm described in \cite{thornton2021constrained}. Our baseline rule-based approach is the SAA algorithm described in Definition \ref{def:saa}. We compare these algorithms over a wide variety of simulated scenarios, and consider both average performance metrics (Figures \ref{fig:firstCase}-\ref{fig:lown}) and the distribution of outcomes (Figures \ref{fig:hrcol}-\ref{fig:sosd}).

Firstly, we note that the dynamic spectrum access scenario considered in this section presents a largely abstracted `high-level' simulation, while more of the specific radar signal processing details are considered in the multi-target tracking scenario. In the first set of experiments, we consider average performance metrics over $500$ runs of each algorithm, where each run consists of $n=1e4$ time-slots, or radar pulse repetition intervals.

For the dynamic spectrum access scenario, our simulation environment largely abstracts the finer signal processing details, and we focus on the fundamental quantities of collisions and missed opportunities. In Figures 3-7, we consider very simple Markov interference behavior which can be thought of as a standard `Gilbert-Elliot' model \cite{Hochwald1999}, where an interference source with fixed location in the frequency domain can be thought of as switching from `on' to `off' and vise versa according to the transition probabilities of the underlying Markov chain.

The main performance metrics evaluated in this section, namely collisions and missed opportunities, reflect the efficient usage of available spectrum, and are expected to correlate strongly with SINR and range resolution, respectively. We note that an emphasis on these metrics is not unique to the present work, and has become commonplace in the radar dynamic spectrum access literature. For example, spectral collisions and missed opportunities have been studied in \cite{Kirk2019,Kirk2023,selvi2020reinforcement,thornton2020deep,kovarskiy2020evaluation}.

In the first set of experiments, we begin by considering a simple Markov interference channel with only 2 states, having stationary transition probability matrix $P = [p_{11},p_{12};p_{21},p_{22}]$. The initial state $s_{0}$ is selected randomly. We note that under $\pi_{\text{SAA}}$, the transition associated with $p_{12}$ corresponds to a collision event, while $p_{21}$ corresponds to a missed opportunity. The probability of missed opportunity and collision under $\pi_{\text{SAA}}$ are then 
\begin{align}
	P(\text{Col}) &= \mu_{1} p_{12} = \frac{p_{12}p_{21}}{p_{12}+p_{21}},	
\end{align}
and
\begin{align}
	P(\text{Mo}) &= \mu_{2} p_{21} = \frac{p_{12}p_{21}}{p_{12}+p_{21}}.
\end{align}

Thus, if we fix $p_{11}$, $p_{21}$, and $p_{22}$ while varying $p_{12}$, we expect the probability of both missed opportunity and collision under $\pi_{\text{SAA}}$ to tend towards $p_{21}/(1+p_{21})$ sub-linearly as $p_{12} \rightarrow 1$. If we send $p_{12} = p_{21} \rightarrow 1$, then we expect the rates of missed opportunity and collision under $\pi_{\text{SAA}}$ to tend towards $1$ linearly.

In Figure \ref{fig:firstCase}, we observe the collision and missed opportunity rate over a time horizon of $n = 1e4$ as transition probability $p_{12}$ is varied and the other transition probabilities remain fixed, noting that the expected trends in SAA performance occur. We observe that for low values of $p_{12}$, namely less than $0.18$, SAA and TS perform nearly equivalently in terms of collision rate. However, in this regime, SAA provides some advantage as there are a reduced number of missed opportunities. We note that an inflection occurs around $p_{12} = 0.3$. As the channel becomes less stationary, the TS policy learns to avoid collisions effectively by avoiding certain actions altogether. Equivalently, the band is utilized more often, and the number of opportunities missed by the avoidant scheme of TS decreases. However, in the regime of $0.4 \leq p_{12} \leq 0.6$, the TS policy opts to avoid all collisions at the expense of considerable missed opportunities due to difficulty in predicting utilization in the band of interest. Thus, we observe that as the interference becomes more time-varying, and $P$ becomes less diagonally dominant, TS performs much better than SAA, as expected.

In Figure \ref{fig:joint}, we observe the performance of each scheme as transition probabilities $p_{12}$ and $p_{21}$ are jointly varied between $0$ and $1$. Values of $p_{12} = p_{21}$ near zero indicate a nearly stationary channel, while values $p_{12} = p_{21} \approx 1$ indicate a rapidly fluctuating interference channel that changes deterministically. We note the performance can be segmented into three regions. For $p_{12} = p_{21} \leq 0.2$, the performance of TS closely mirrors SAA in terms of collisions, while incuring a slightly higher missed opportunity rate. In the regime of $0.2 < p_{12} = p_{21} < 0.6$, the TS approach effectively mitigates collisions at the expense of a missed opportunity rate of approximately $0.5$. In the final regime, $p_{12} = p_{21} > 0.6$ we note that the TS strategy outperforms SAA in terms of both missed opportunities and collisions. This is because the interference transitions become closer to deterministic as the transition probabilities approach $1$.

\begin{figure}
	\centering
	\includegraphics[scale=0.5]{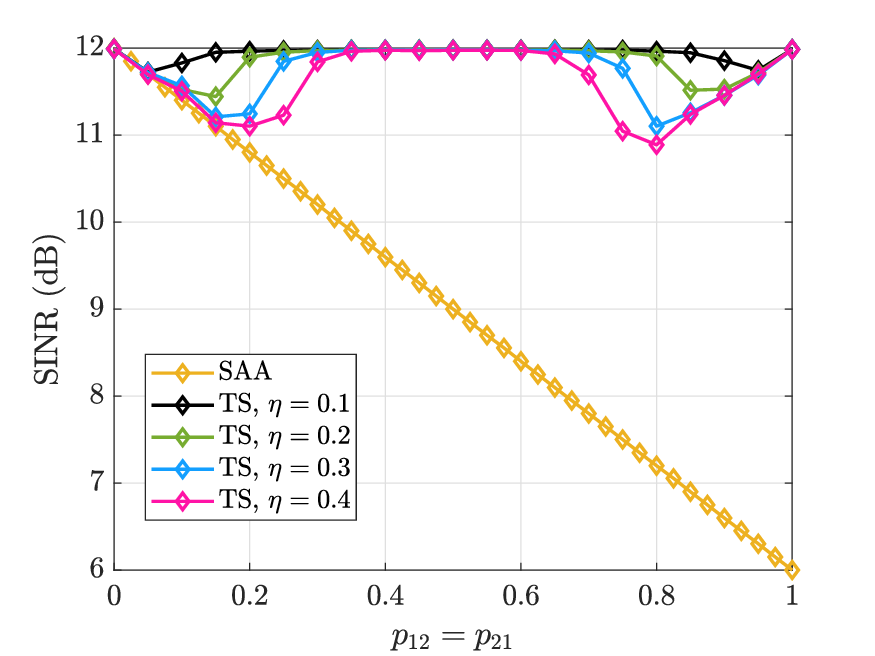}
	\caption{Average instantaneous SINR as transition probabilities $p_{21}$ and $p_{21}$ are jointly varied. As the channel becomes increasingly dynamic, the SINR under the SAA policy drops and reliability of target detection is diminished.}
	\label{fig:sinr}
\end{figure}

\begin{figure}
	\centering
	\includegraphics[scale=0.5]{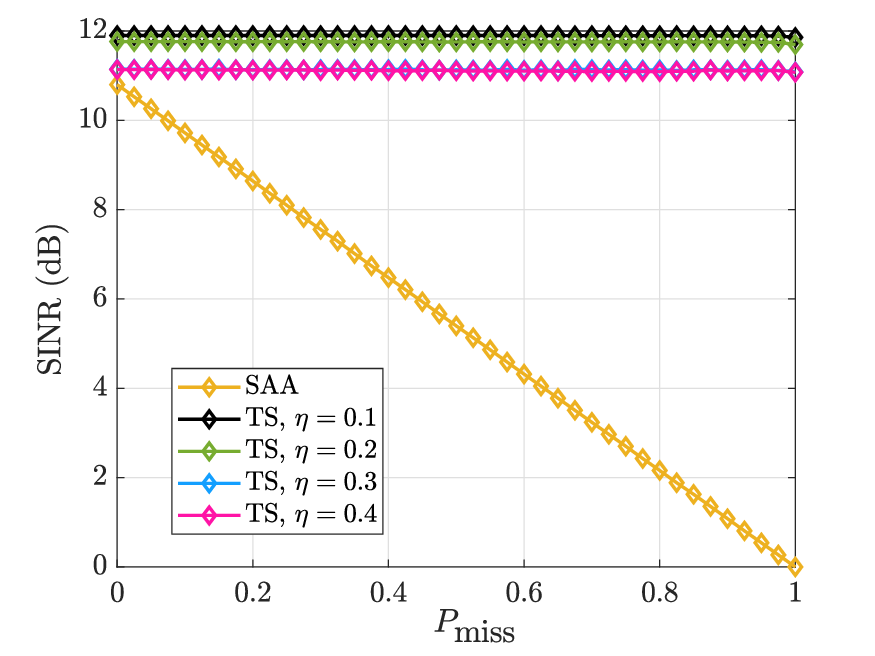}
	\caption{Impact of missed interference detections on average SINR. TS policies are generally robust to incorrect observations, while the SAA policy quickly degrades in performance with increasing $P_{miss}$.}
	\label{fig:miss}
\end{figure}

In Figure \ref{fig:sinr}, we see the received SINR\footnote{In the experiments, SINR is approximated using the well-known radar range equation, assuming an $\text{INR}$ of 14dB.} of these approaches for the case of jointly varied transition probabilities. We observe that for all values of $p_{12} = p_{21}$, the TS waveform selection strategy effectively maintains a high SINR, while SAA begins to perform poorly as the channel becomes increasingly time-varying. For rapidly varying channels, it can be concluded that SAA is untenable to maintain a sufficiently high probability of detection. Additionally, as the interference behavior becomes closer to deterministically varying, the TS approach is able to effectively predict transitions, and approaches missed opportunity and collision rates nearing zero.

Figure \ref{fig:miss} shows the performance degradation of SAA when the assumption of perfect interference observation is lifted. In this experiment, the radar fails to observe the interference state $s_{t}$ with probability $P_{\text{miss}}$ at each time slot, and instead observes a vector of all zeros. Once again, the time horizon is $n = 1e4$ and the transition behavior is held constant at $p_{12} = p_{21} = 0.3$. We observe that while TS is very robust to imperfect observations, the performance of SAA quickly degrades, as expected.

\begin{figure}
	\centering
	\includegraphics[scale=0.5]{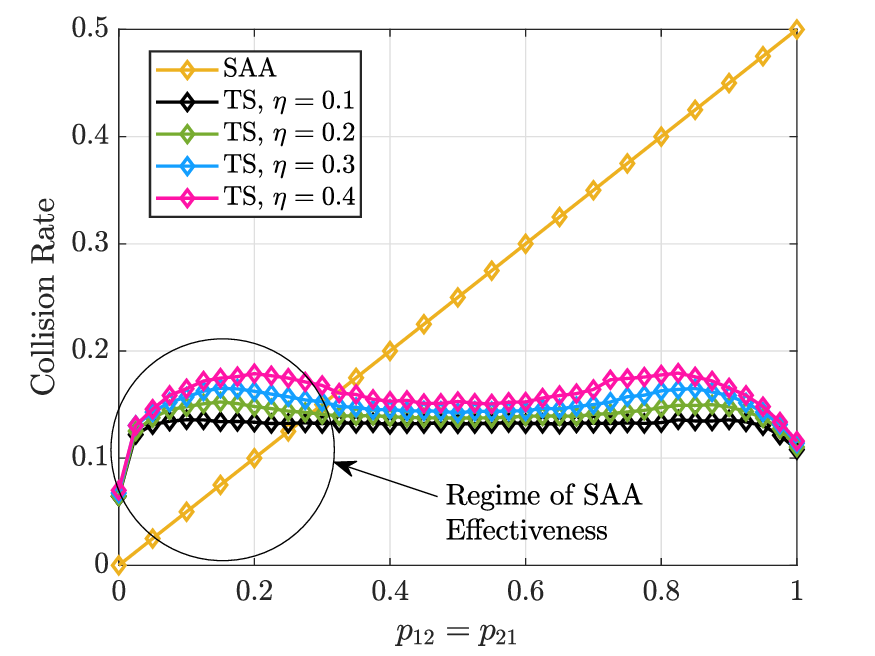}
	\caption{Collision rate of SAA and TS algorithms when time horizon is limited to $n=300$. We observe the regime where SAA is dominant.}
	\label{fig:lown}
\end{figure}

To highlight the regime where SAA is effective, we examine a drastically reduced time horizon, $n = 300$, in Figure \ref{fig:lown}. We observe that due to the learning time required by TS, the approach performs significantly worse than in the case of a longer time horizon, while the performance of SAA remains unchanged. This is to be expected as the regret bound of TS (ignoring logarithmic terms) is $\tilde{\mathcal{O}}(d^{3/2}\sqrt{n})$ \cite{agrawal2013thompson}. However, we note that even in the case of a short time horizon, when $p_{12} = p_{21} > 0.4$, TS performs better than SAA in terms of collision rate, due to the high degree of non-stationarity in the interference channel.

From these results, we observe that generally, the learned policies are more robust to quickly-varying interference behavior and imperfect interference observations than SAA. On the other hand, SAA performs decidedly better than SAA when the time horizon is short, interference observation probability is high, and the channel is varying relatively slowly. Thus, it is important for the system designer considering the use of a learning-based algorithm to take into account the expected time horizon of the learning process as well as state observation probability. In general, fixed decision rules will break down when fundamental assumptions are violated, while learning-based approaches may be more robust. We note that for many radar applications, such as target tracking and electronic warfare, the state observation probability may be low, suggesting the value of learning may be very high for applications in which specific domain knowledge is not available \emph{a priori}.

\begin{figure*}
	\centering
	\includegraphics[scale=0.5]{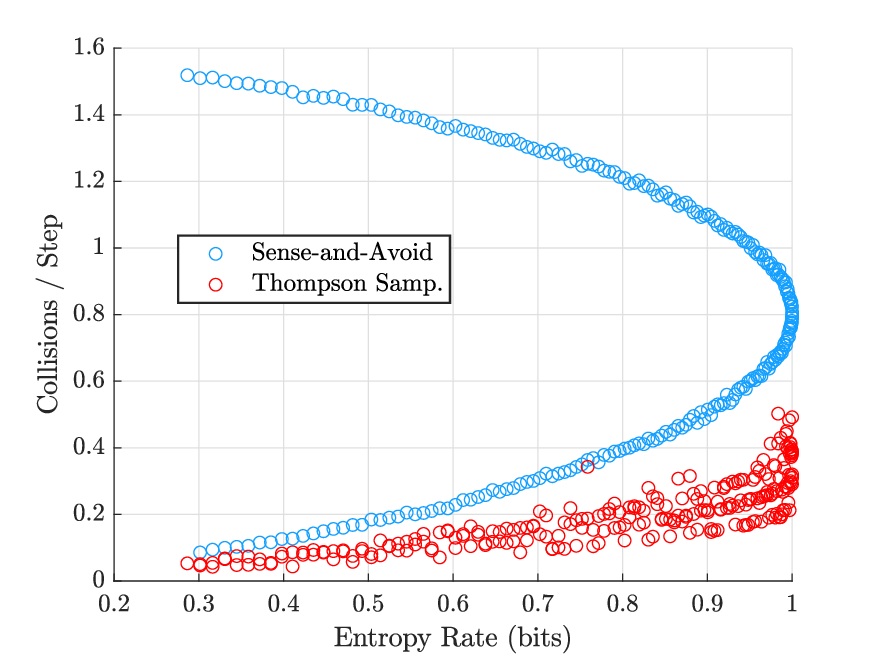}
	\includegraphics[scale=0.5]{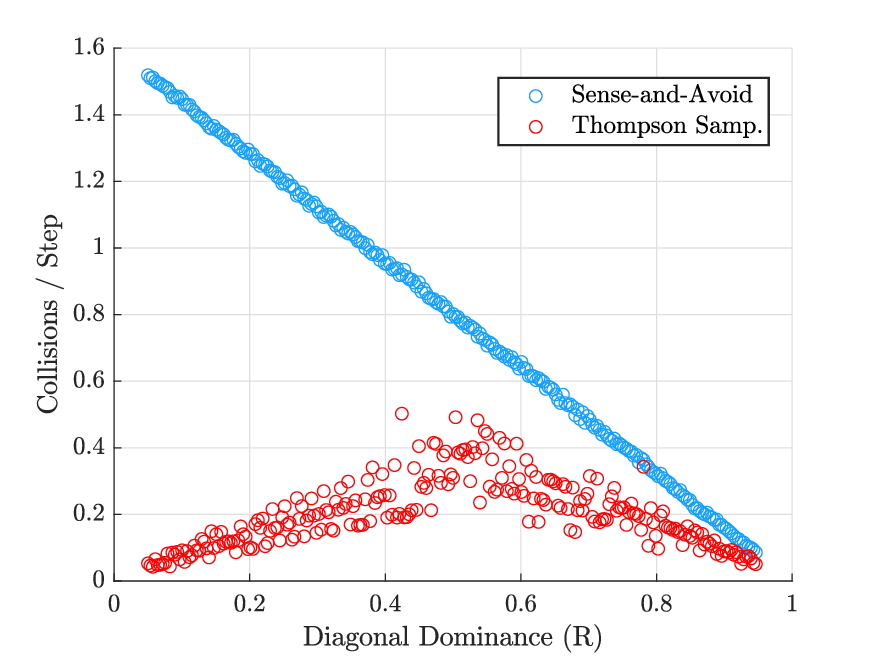}
	\caption{Collision Performance of randomly generated state transition matrices of size $(5,5)$. We observe a strong linear relationship between diagonal dominance and SAA performance, as well as a strong negative linear relationship between entropy rate and TS performance.}
	\label{fig:hrcol}
\end{figure*}

\begin{figure*}
	\centering
	\includegraphics[scale=0.5]{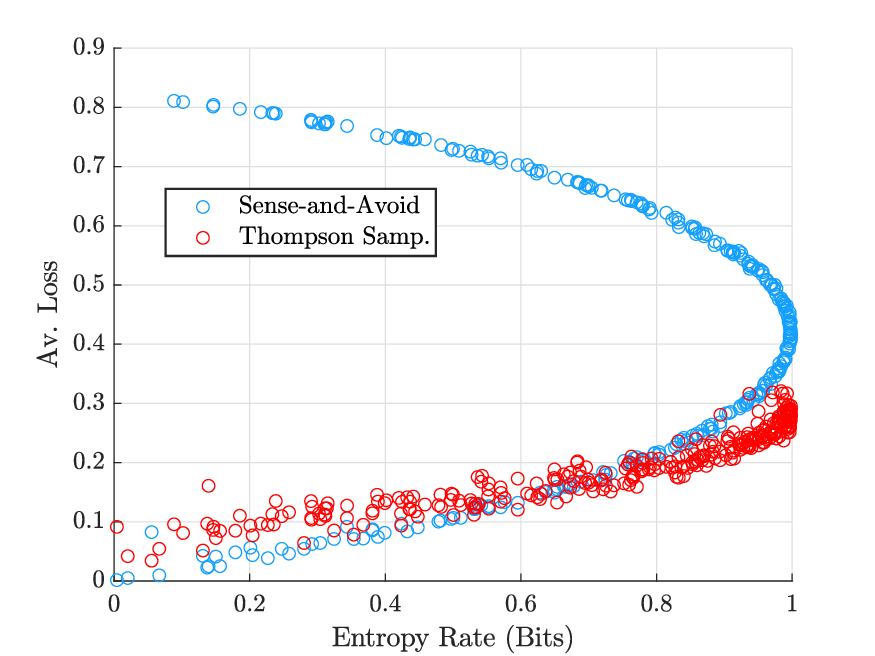}
	\includegraphics[scale=0.5]{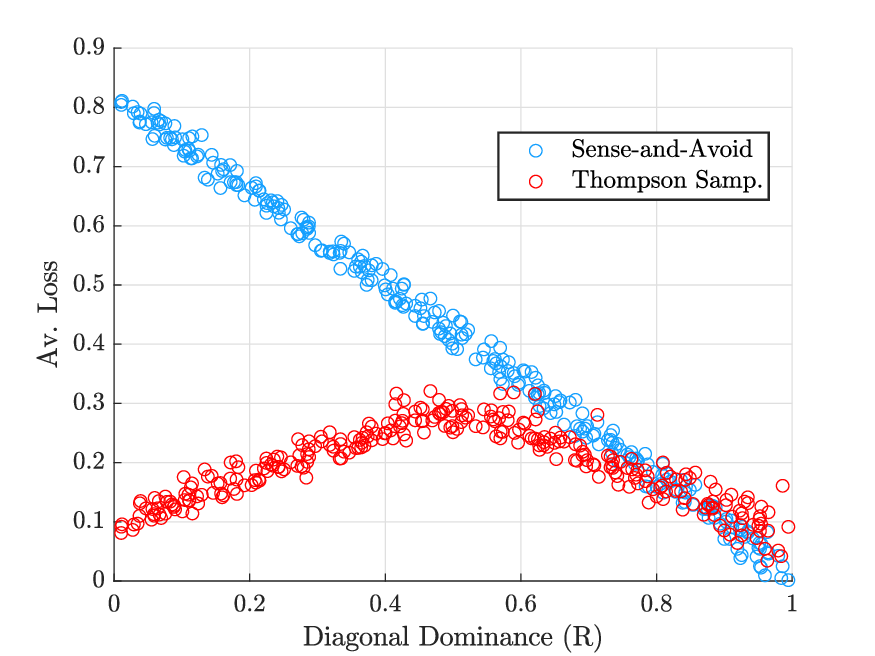}
	\caption{Variation in observed losses with entropy rate and diagonal dominance. We observe once again that for high entropy rates and low diagonal dominance values, TS reliably outperforms SAA even when the time-horizon is very limited.}
	\label{fig:hrloss}
\end{figure*}

In the next set of experiments, we examine the results of SAA and TS performance over a wider range of randomly generated transition matrices. From these results, we gain a greater understanding of the general performance trends in more practical cases of interest than the two-state Markov chain examined above.

In Figure \ref{fig:hrcol}, a set of $300$ randomly generated transition matrices of size $(5,5)$ are examined, and the average collision performance of the SAA and TS policies are compared with the entropy rate and diagonal dominance measure of each individual transition matrix. We observe that as diagonal dominance tends toward zero, the performance of SAA degrades linearly. On the other hand, TS is able to perform well for cases of low-diagonal dominance, provided that the entropy rate is low. We find that entropy rate is a much stronger predictor of TS performance than diagonal dominance, as expected given the recent information-theoretic analysis of TS.

Figure \ref{fig:hrloss} examines the same scenario as the previous experiment, but shows performance in terms of losses instead of collisions. Here we see that at very low entropy rates, as well as very high diagonal dominance values, TS performs marginally worse than SAA. This can be attributed to the reduced number of missed opportunities experienced by SAA in these regimes. We recall that the time horizon for both algorithms is limited to $n = 1e4$ steps, and expect that as the time horizon is increased, the performance of TS will tend towards the performance of SAA in these regimes. 

\begin{figure*}
	\centering
	\includegraphics[scale=0.5]{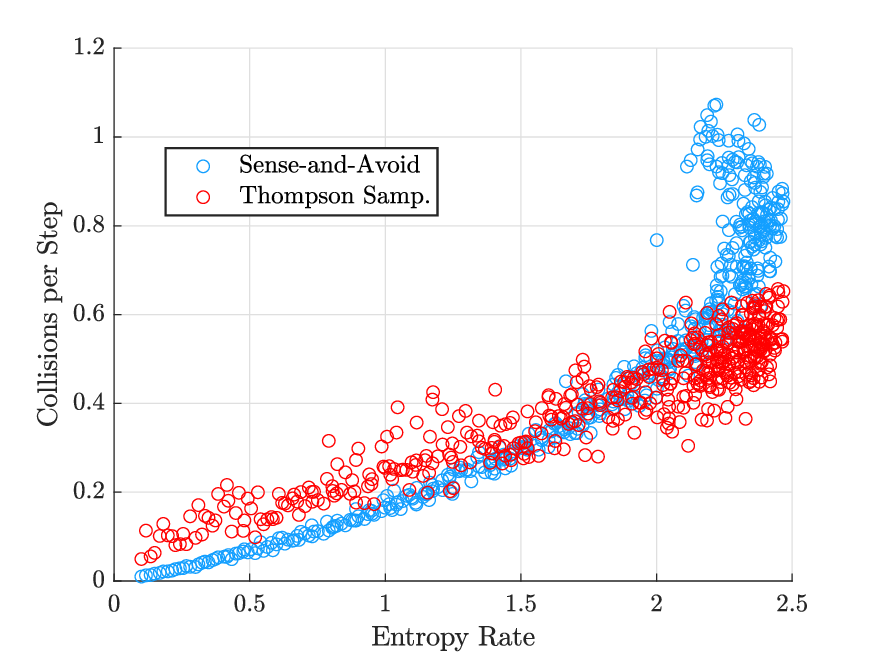}
	\includegraphics[scale=0.5]{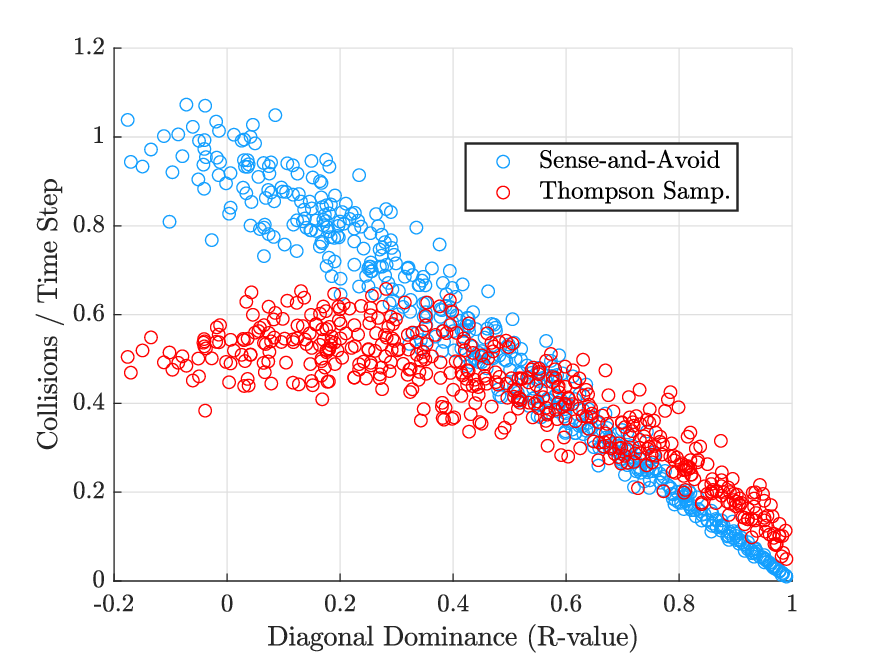}
	\caption{Collision performance for randomly generated larger state-transition matrices, having dimensionality $(125,125)$. We observe that entropy rate is a strong predictor of TS performance, and that diagonal dominance is a strong predictor of SAA performance. We observe that learning-based approaches are expected to be beneficial at high entropy rates and for low values of diagonal dominance.}
	\label{fig:hrdominant}
\end{figure*}

In Figure \ref{fig:hrdominant}, we observe the collision performance once again over a set of larger transition matrices, having dimensionality $125 \times 125$. In this scenario, we once again observe the same general performance trend. Namely, that the performance of TS degrades linearly as entropy rate increases, and that the performance of SAA degrades linearly as diagonal dominance increases. We observe that there is also a strong inverse relationship between entropy rate and SAA performance in this case. This can be attributed to the large size of the transition matrices. Since the transition probabilities are dispersed over many states, a high entropy rate generally implies low diagonal values. While both SAA and TS degrade as entropy rate increases, we observe that the degradation of TS is slower, and that there is a noticeable performance improvement at high entropy rates.

\begin{figure*}
	\centering
	\includegraphics[scale=0.5]{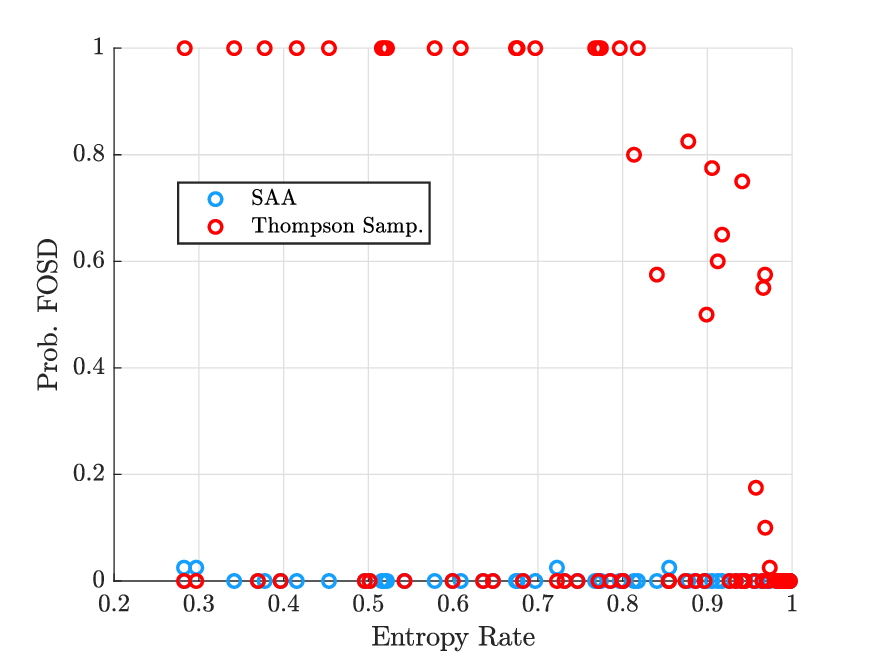}
	\includegraphics[scale=0.5]{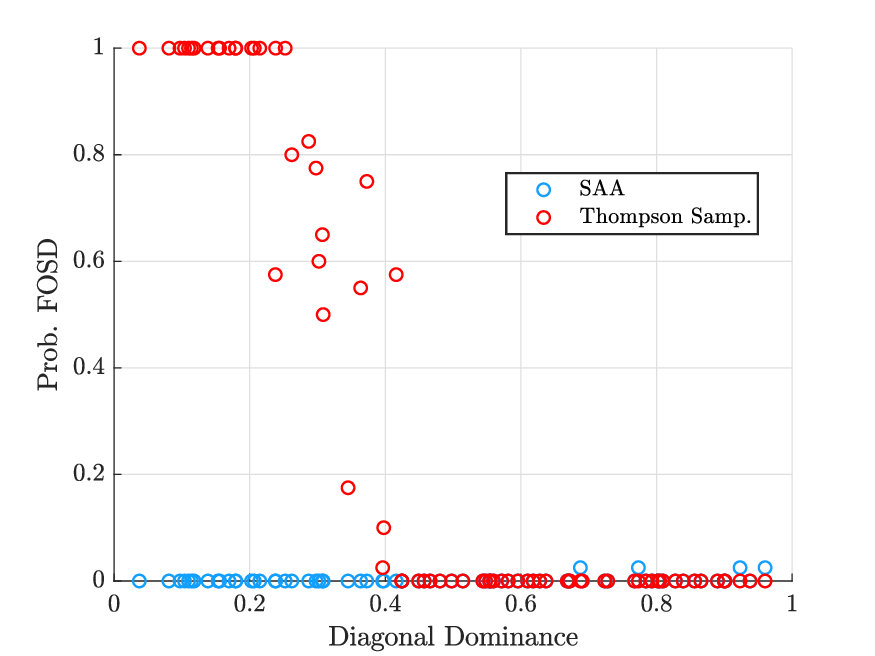}
	\caption{Comparison of the entropy rate and diagonal dominance to the probability that SAA and TS are FOSD with respect to the other algorithm. We observe that low diagonal dominance is the best predictor of when TS will be FOSD.}
	\label{fig:fosd}
\end{figure*}

\begin{figure*}
	\centering
	\includegraphics[scale=0.5]{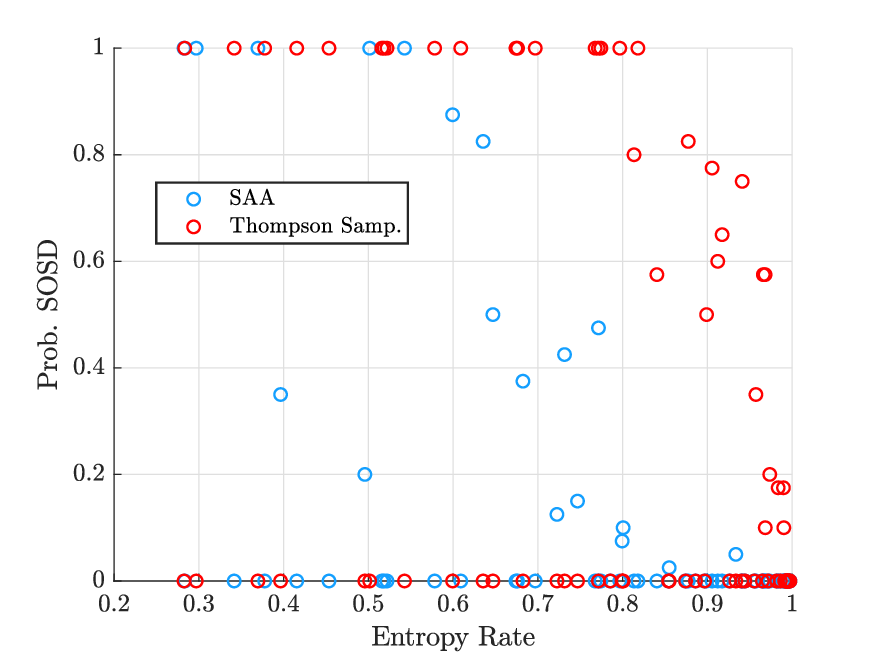}
	\includegraphics[scale=0.5]{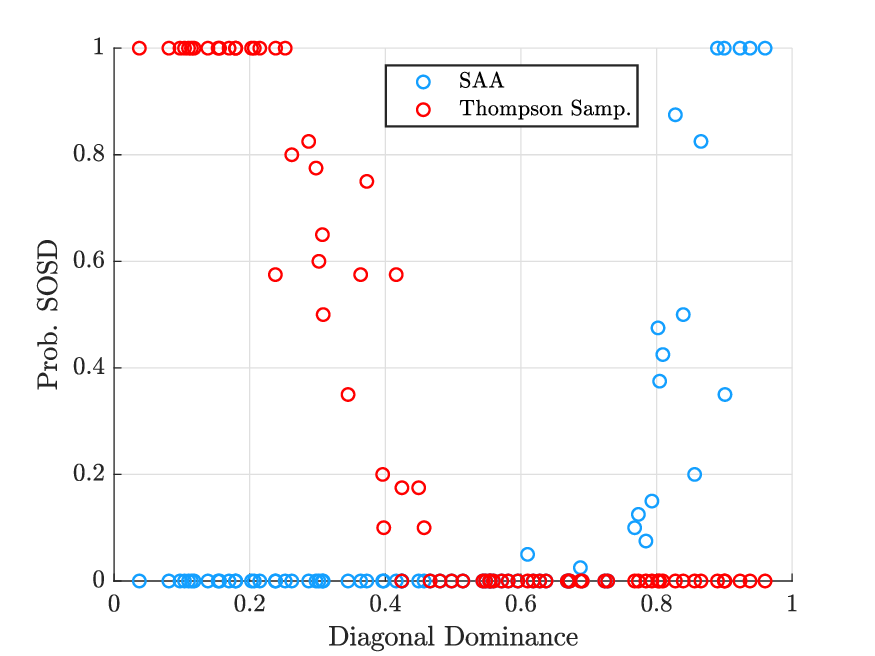}
	\caption{Second order stochastic dominance with varying entropy rate and diagonal dominance. We observe that diagonal dominance is an effective predictor of SOSD.}
	\label{fig:sosd}
\end{figure*}

In Figure \ref{fig:fosd}, we compare the entropy rate and digaonal dominance to the probability that SAA and TS will be FOSD with respect to the opposing algorithm. We note that SAA is only FOSD very infrequently at high values of diagonal dominance. On the other hand, TS is frequently FOSD at low diagonal dominance values. We observe that diagonal dominance is a much stronger predictor of FOSD, which is expected given that it is a stronger predictor of SAA performance than entropy rate. Likewise, Figure \ref{fig:sosd} examines the probability that each algorithm is SOSD with respect to the other. We observe that TS is predictably SOSD for diagonal dominance values around or below $0.4$, while SAA is frequently SOSD for diagonal dominance values around or above $0.75$. These results are expected, since at low diagonal dominance values SAA performs predictably poorly, and at high diagonal dominance values SAA performs predictably well, while the performance of TS at high diagonal dominance values depends on the time horizon and choice of prior distribution.

\section{Waveform-Agile Multiple Target Tracking}
In this Section, we study the online waveform selection problem in the context of generic multiple-target tracking, which is a more challenging scenario than the dynamic spectrum access problem studied in Section \ref{se:dsa}. 

In this application, the size of the state transition matrix $P$ becomes very large as the number of targets tracked increases. This can present a challenge for both learning-based and fixed rule-based waveform selection strategies, as the learning process can become cumbersome, and it may be difficult to select decision rules in advance.

The general framework of this problem is an extension of the model described in Section \ref{se:problem}. We once again consider an $n$-step discrete-time scenario where time is indexed by $t$. Here, the radar is tracking $M \geq 2$ targets. The state of a particular target $m$ at time step $t$ is described by the vector 
\begin{equation}
	s^{m}_{t} = [x^{m}_{t}, y^{m}_{t}, \dot{x}^{m}_{t}, \dot{y}^{m}_{t}], 
\end{equation}
where $x_{t}$ and $y_{t}$ are the two-dimensional position coordinates of target $m$, while $\dot{x}$ and $\dot{y}$ are the velocities in the $x$ and $y$ directions respectively. We note that if there are $P$ total position states, $V$ total velocity states, and $M$ targets, the size of the state space is $(P \times V)^{M}$. Thus, even for modest values of $M$, the size of the state transition matrix can be very large.

The true target states are unknown to the radar and must be inferred using noisy measurements. The observation of target $m$, denoted by $o^{m}_{t}$ is a noisy measurement of $s^{m}_{t}$. This observation is extracted from the resulting range-Doppler response. Since the scene contains $M$ targets, the underlying state of the entire scene is the composite vector of individual target states $s_{t} = [s^{1}_{t},...,s^{M}_{t}]$. Similarly, the composite observation of all the targets is denoted by $o_{t} = [o^{1}_{t},...,o^{M}_{t}]$. 

Each time step, the radar must select a waveform $w_{i}$, having complex envelope $f_{i}$, from a finite library $\mathcal{W}$ and observes a loss $\ell_{t}$. Here, the loss is an estimate of the SINR formed using the range-Doppler response. Let $\mathrm{RD}_{\text{dB}}$ be the range-Doppler map in dB units, indexed by range and Doppler cell, respectively. Then, the loss is computed as follows
\begin{equation}
	\ell_{t} = \frac{1}{|\mathcal{T}|}\sum_{i,j \in \mathcal{T}} \mathrm{RD}_{\text{dB}}(i,j) - \frac{1}{|\mathrm{RD}|}\sum_{i,j} \mathrm{RD}_{\text{dB}}(i,j),
	\label{eq:tloss}
\end{equation}
where $\mathcal{T}$ is the set of target bins, established using a prior estimate on the targets' position, followed by constant-false alarm rate detection and clustering, which is performed using the DBSCAN algorithm \cite{schubert2017dbscan}.

The waveform must be selected based on the history of waveforms and measurements which have been observed prior to the current time step. We denote the history, or \emph{information state} for step $t$ by $\mathcal{H}_{t} = \{w_{i},o_{i},\ell_{i}\}_{i=1}^{t}$.

However, at this point we encounter a practical challenge. Since the size of the information state grows with each time step, it becomes difficult to directly use the information state to select actions as $t \rightarrow \infty$. Thus, we may select actions based on a \emph{belief state} $b_{t}$, which is a set of parameters with fixed cardinality that closely approximates a sufficient statistic of the information state. The belief state is computed by the radar's tracking filter, and corresponds to the radar's current estimate of the target's 2d position and velocity. Ideally, $P(s_{t}|\mathcal{H}_{t}) \equiv P(s_{t}|b_{t})$.

From the radar's perspective, the uncertain dynamical model of the $m^{\text{th}}$ target can be represented as a Markov chain with transition probabilities
\begin{equation}
	p^{m}_{i,j} = P(s^{m}_{t} = j | s^{m}_{t-1} = i),
\end{equation}
which are unknown to the radar \emph{a priori}. We then apply the notation $P^{m}$ to denote the state-transition probability matrix for target $m$. We similarly describe the composite state-transition matrix of all the targets in the scene by $P$. Additionally, the uncertainty associated with observing a specific target measurement, given the target's state and a waveform, can be represented by the measurement likelihood function
\begin{equation}
	q^{m}_{i,j}(w_{t}) = P(o^{m}_{t} = j | s^{m}_{t} = i, w_{t}).
\end{equation}

The target motion and measurement models can be equivalently expressed in a functional form as
\begin{align}
	s_{t}^{m} &= f^{m}(s_{t-1}^{m}) + \nu^{m}_{t} \\
	o_{t}^{m} &= h^{m}(s_{t}^{m}) + \xi^{m}_{t}(w_{t}), 
\end{align}
where $f$ and $h$ are the target motion and measurement functions, which may be nonlinear in general, and $\nu^{i}$ and $\xi^{i}$ are noise terms which may be non-Gaussian in general. We note that the choice of waveform impacts the measurement process by through the covariance of the measurement noise $\xi^{m}_{t}$. For simple cases, such as single target tracking, while the target motion and measurement models are linear and the target can be reliably detected, optimal waveforms can be scheduled using the procedure described in \cite{kershaw1994optimal}. However, when such stringent assumptions are dropped, the waveform selection problem becomes more nuanced. Some works have considered numerical approximations to the basic approach \cite{kershaw1997waveform,Sira2007}. However, even for these approaches high SNR is assumed, and the computational requirements can become so burdensome as to inhibit the implementation of non-myopic optimization \cite[Chapter 6]{hero2007foundations}.

At this point we must consider important differences from the dynamic spectrum access scenario. First, in this setting, it less obvious how a rule-based waveform selection strategy should be selected, in comparison to the DSA setting, in which interference avoidance can be presumed to be the objective. Secondly, since there is a large space of possible target compositions, trajectories, and waveform catalogs, rule-based strategies that are designed under a particular set of assumptions are very likely to perform poorly in differing scenarios. Finally, in the multi-target tracking scenario, the number of states is expected to be very large, and developing waveform-selection rules for each possible state may be non-trivial.

In general, this problem is further complicated due to the following:
\begin{enumerate}
	\item Target motion and measurement models may in general be nonlinear, thus invalidating the Fisher information-based approach of \cite{kershaw1994optimal}.
	\item Discretization of state and observation spaces to an appropriate level is non-trivial without prior knowledge.
	\item Closed-form solutions are usually not available. 
\end{enumerate}

However, we note that the notions of entropy rate and diagonal dominance have similar interpretations in this scenario. Entropy rate generally corresponds to the \emph{maneuverability} of the targets while diagonal dominance corresponds to the \emph{stationarity} of the targets. We expect rule-based strategies to perform best when both entropy rate and diagonal dominance are small.

\subsection{Numerical Results}
In this section, we compare the performance of a TS-based waveform selection strategy to that of a fixed rule-based scheme. We primarily examine performance in terms of the loss function considered, under the assumption that improved range-Doppler SINR will result in favorable tracking performance. In the simulations, the rule-based waveform selection strategy is inspired by the simplified approach of \cite{rihaczek1971radar}, where the waveform is deterministically selected using the belief state $b_{k}$ and knowledge of the waveform ambiguity type. In \cite{rihaczek1971radar}, it is asserted that despite the infinite possibilities of radar waveforms, one may reasonably consider four broad classes of ambiguity functions, which depends on the time-bandwidth product, periodicity, and range-Doppler coupling. 

We consider a catalog of $|\ncalW| = 50$ waveforms. These include linear frequency modulated, exponential frequency modulated, Barker-13 phase-coded, Frank phase-coded, and Zadoff-Chu phase-coded waveforms. Within each of these classes, waveform parameters such as the bandwidth, FM sweep rate, number of sub-pulses, and sub-pulse length are varied.

Each radar CPI consists of 256 pulses, and the radar transmits a block of identical pulses during each CPI. The radar makes a decision to adapt its waveform is made on a CPI-to-CPI basis. The scene contains $M=3$ targets, each having a randomly generated state transition matrix. Constraints are placed on the number and locations of nonzero elements in the state transition matrix so that large changes in position and velocity are restricted. 

At the beginning of each track, the initial position, velocity, and trajectory of each of the three targets is randomized. The target frequency responses $h_{g}(f)$ and the clutter frequency responses $c_{h}(f)$ are Gaussian processes, and vary from CPI-to-CPI. The mean frequency response for each target is Gaussian shaped, and is expressed by
\begin{equation}
	\E[h_{g}(f)] = \exp(-(f/\beta_{g})^{2}),
\end{equation}
where $\beta_{g}$ is a target specific parameter that defines the spatial extent of the target. The mean clutter frequency responses are similarly defined by
\begin{equation}
	\E[c_{h}(f)] =  \exp(-(f/\gamma_{h})^{2}),
\end{equation}
where $\gamma_{h}$ determines the spatial extent of clutter component $h$.

\label{se:tracking}
\begin{figure*}
	\centering
	\includegraphics[scale=0.55]{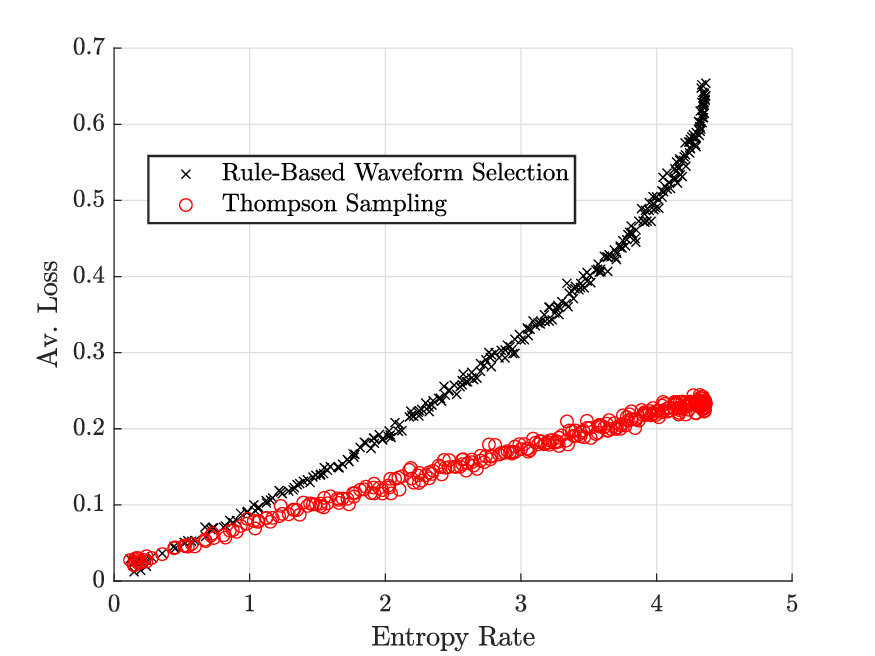}
	\includegraphics[scale=0.55]{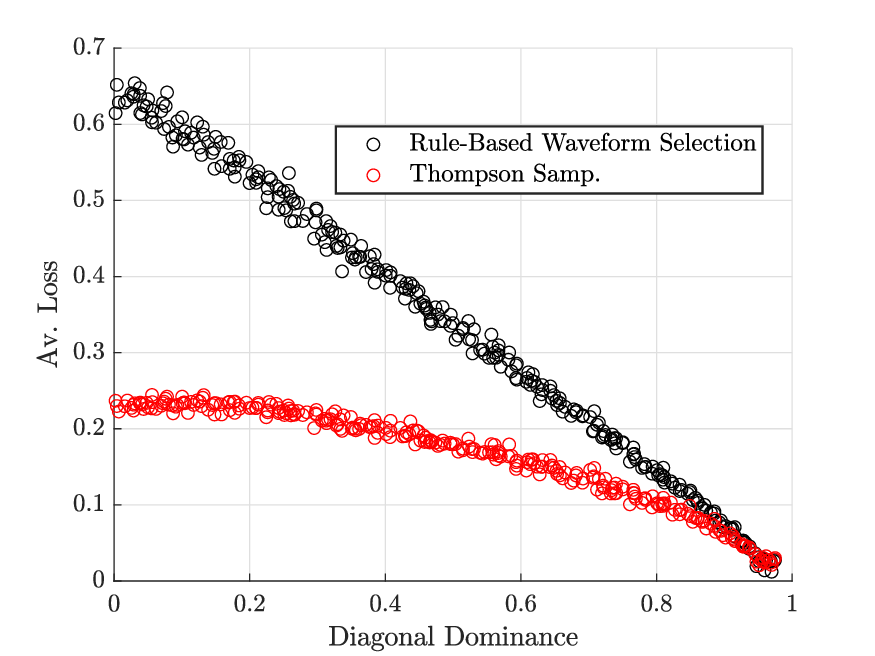}
	\caption{Variation in observed losses with entropy rate and diagonal dominance. We observe once again that for high entropy rates and low diagonal dominance values, TS reliably outperforms rule-based waveform selection even when the time-horizon is very limited.}
	\label{fig:track}
\end{figure*}

In Figure \ref{fig:track}, we observe the average loss per tracking instance, where each tracking instance lasts for $n=1e4$ radar CPIs. We examine performance for a variety of target motion and measurement models, resulting in a wide range of state transition matrices. We note that entropy rate is a strong inverse predictor of both TS and rule-based performance. Further, we note that diagonal dominance is a strong predictor of rule-based performance, although it is a weaker predictor of TS performance. This is due to the fact that diagonal dominance corresponds to the degree of target stationarity. For nearly stationary targets, using the most recent state estimate as a basis for waveform selection is expected to be effective. Additionally, entropy rate quantifies the `randomness' in the targets motion. Targets which quickly make sharp turns, or change velocity suddenly will result in a high entropy rate.

\begin{figure*}
	\centering
	\includegraphics[scale=0.5]{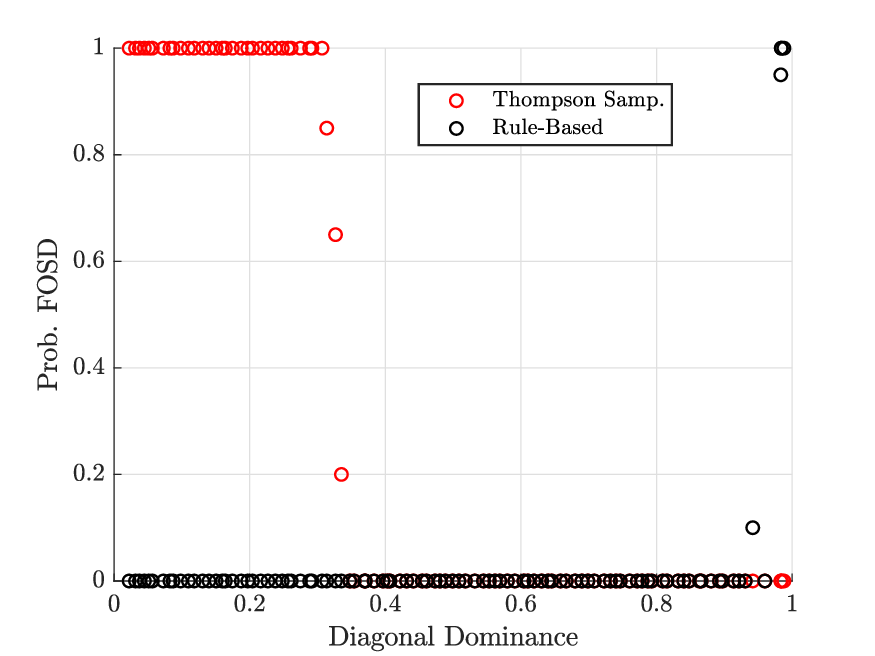}
	\includegraphics[scale=0.5]{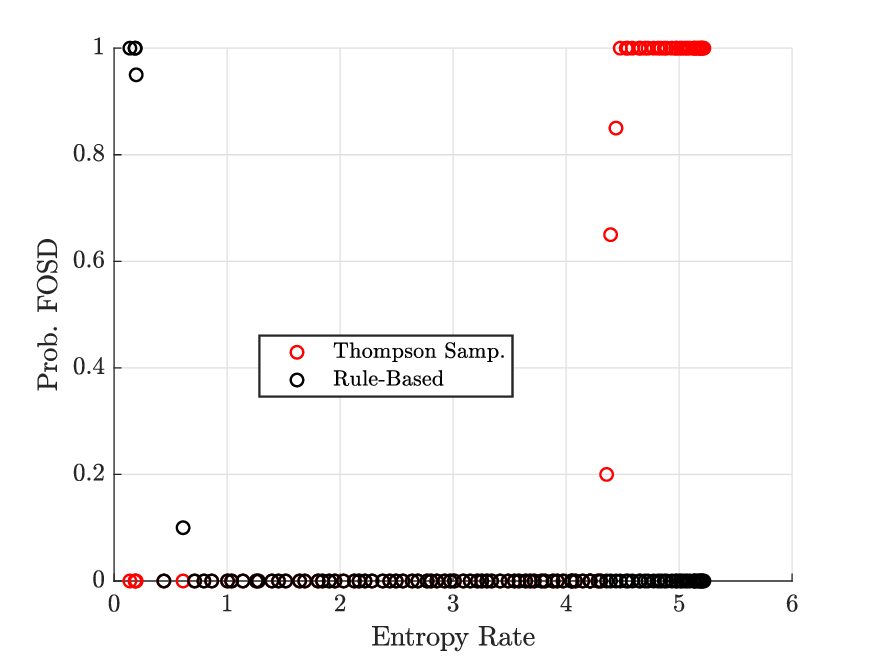}
	\caption{Diagonal dominance and entropy rate vs first order stochastic dominance.}
	\label{fig:fosdtrack}
\end{figure*}

In Figure \ref{fig:fosdtrack}, we observe the probability that the TS and rule-based strategies are first-order stochastically dominant with respect to the other strategy, over a range of entropy rate and diagonal dominance values. We test for stochastic dominance using a simple statistical test \cite{mcfadden1989testing}, having run each algorithm 100 times for $n=1e4$ at each value of diagonal dominance and entropy rate. We observe that when diagonal dominance is below $\approx .375$, TS is generally first-order dominant. Conversely, the rule-based strategy is only first order dominant for cases of diagonal dominance $\geq .925$. We observe that a similar trend holds with respect to entropy rate. For high entropy rates, TS is generally first-order dominant, and for entropy rates very close to zero, rule-based waveform selection is first-order dominant.

\begin{figure*}
	\centering
	\includegraphics[scale=0.5]{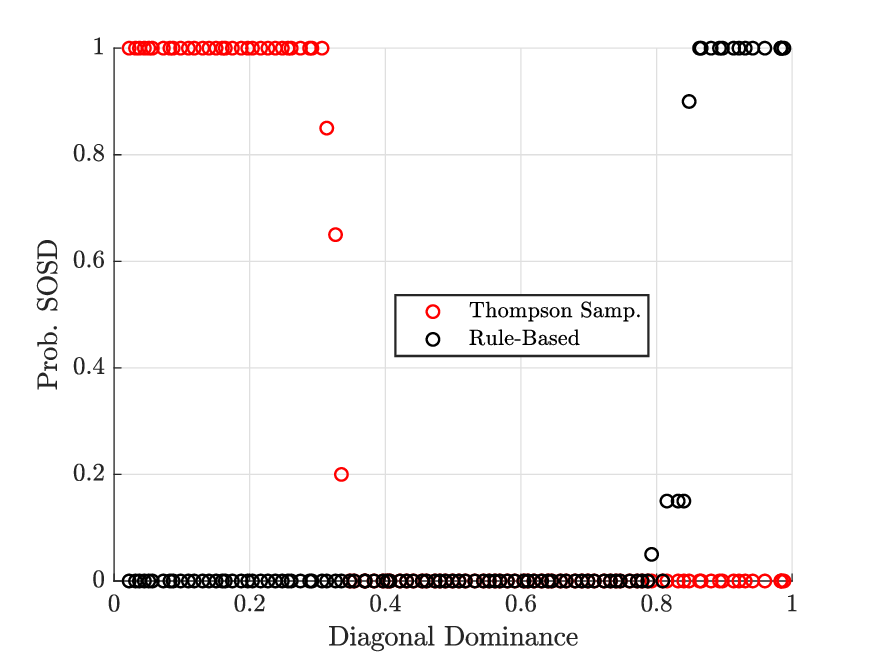}
	\includegraphics[scale=0.5]{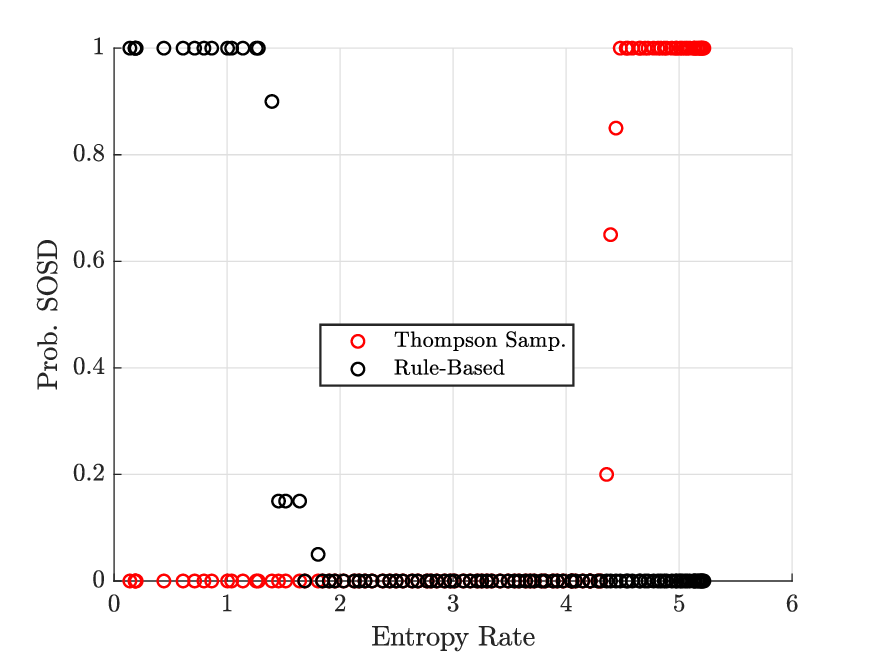}
	\caption{Diagonal dominance and entropy rate vs second order stochastic dominance.}
	\label{fig:sosdtrack}
\end{figure*}

Figure \ref{fig:sosdtrack} presents a similar results, but examines second-order stochastic dominance. We observe a similar trend as in the previous Figure, but note that the rule-based strategy begins to exhibit SOSD at diagonal dominance values as low as $\approx 0.78$. We note that while rule-based strategies are unlikely to perform strictly better than learning algorithms for cases of realistic target maneuverability, they may provide a simple and occasionally risk-averse solution.

\section{Conclusions and Insights for System Design}
\label{se:conclusions}
In this contribution, we have sought to answer a basic question about the value of learning-based cognitive radar by examining the performance of a Thompson Sampling-based waveform selection strategy as compared to a rule-based strategy in Markov interference and target tracking channels. We showed that the performance of any policy, or decision function, can be analyzed by decomposing the set of channel transitions into three subsets which characterize collision, missed opportunity, and successful transition events. We argue that a similar procedure can be generalized to compute the expected performance of policies in $K^{\text{th}}$ order Markov channels for various applications, as in \cite{thornton2022universal}. For stationary channel dynamics, the performance of any ML strategy can be decomposed into the performance during the exploration period and performance during exploitation.

Our numerical results demonstrate that given a time horizon of $10^{4}$ pulse repetition intervals, Thompson Sampling performs at least as well as the sense-and-avoid policy in terms of average SINR. However, for \emph{nearly stationary} interference channels over a more limited time horizon, sense-and-avoid outperforms TS by utilizing additional bandwidth while maintaining a nearly equivalent SINR. For Markov interference channels which are very dynamic, namely which have diagonal transition elements close to zero, TS provides large performance benefits, and generally dominates SAA in the first-order stochastic sense. Particularly, for channels which are very dynamic, but are close to deterministic, TS provides the largest performance benefits compared to SAA.

Online learning theory tells us that the most important factors which influence the regret of algorithms are the number of actions, required dimensionality of the context vector, and the time horizon itself. Thus, systems designers planning to employ learning-based strategies should account for the allowable time in which learning may occur, limit the waveform catalog to a manageable size, and represent the environment as efficiently as possible. We expect the general trend of the results shown here to hold for other learning algorithms, which optimize the loss functions defined in (\ref{eq:loss}) and (\ref{eq:tloss}). Some popular algorithms, such as those based on the upper-confidence bound principle, may provide favorable asymptotic performance as compared to TS, at the expense of sample-efficiency. Alternatively, for very simple waveform selection problems, optimal strategies based on the Gittins index may be feasible.

We also note that the observability of the scene's state plays a large role in the value of any deterministic waveform selection strategy. Machine learning algorithms are expected to provide large benefits over rule-based strategies if the observations are highly variable, which is the case when target measurements are incorporated into the model.

Future work will attempt to generalize this methodology to provide insights as to which environmental conditions make cognitive radar beneficial. For example, a more nuanced comparison involving more sophisticated fixed-rule based strategies would give greater insight into the practical value of cognitive radars. However, it is expected that next-generation cognitive radar systems will use a mixture of sequential decision making and rule-based strategies depending on the expected characteristics of the scene. Future work could consider using the definitions of stochastic dominance examined here to develop a principled algorithm selection scheme for a meta-cognitive radar.

\section*{Acknowledgments}
The authors would like to acknowledge and thank Daniel J. Jakubisin and Anthony F. Martone for several technical discussions which helped motivated and shape this work.

\appendices

\section{Proof of Theorem 2}
\begin{proof}[]
	Assuming perfect observability of the interference state (namely $o_{t} = s_{t} \quad \forall t$), the regret of $\psaa$ can be decomposed into two terms, the number of interference state transitions experienced, given by $R \in \nbbN$, and the sub-optimality gap $\Delta$ associated with each transition. Without loss of generality, let $R > 0$, $\Delta > 0$ be constants. Then it follows that SAA incurs regret linearly in $T$. The sub-linearity of TS regret has been well-established given a fixed number of arms $K$ and context dimension $d$, and is order $\tilde{\mathcal{O}}(d^{3/2}\sqrt{n})$. Thus, for any non-stationary Markov channel (namely, when some diagonal elements of $P$ are less than $1$), TS weakly dominates SAA in the asymptote. The extension to cases of non-perfect interference observability directly follow. However, in the finite horizon regime, precise specification of $K$, $T$, and $P$ are necessary to show stochastic dominance. It should be noted that for slowly varying channels, large values of $K$ and small $T$, SAA is more appealing.
\end{proof}

\section{Computation of Diagonal Dominance Measure}
\label{se:dominance}
Let $P$ be a $d \times d$ dimensional matrix, and let $j \in \mathbb{R}^{d}$ be a $d$-dimensional vector of all ones. Then let $r = [1,2,...,d]$ and $r_{2} = [1^{2},2^{2},...,d^{2}]$. We may compute the measure of diagonal dominance $r$ by first defining the following terms
\begin{align*}
        	n & =j A j^T \\
        	\Sigma x & =r A j^T \\
        	\Sigma y & =j A r^T \\
        	\Sigma x^2 & =r_2 A j^T \\
        	\Sigma y^2 & =j A r_2^T \\
        	\Sigma x y & =r A r^T
\end{align*} 
and applying the formula for the Pearson correlation coefficient
\begin{equation}
r=\frac{n \Sigma x y-\Sigma x \Sigma y}{\sqrt{n \Sigma x^2-(\Sigma x)^2} \sqrt{n \Sigma y^2-(\Sigma y)^2}}.
\end{equation}
We note that $r$ is bounded in $[-1,1]$, where $r=1$ corresponds to a matrix having only non-zero elements only on the main diagonal.

\begin{IEEEbiography}[{\includegraphics[width=1in,height=1.25in,clip,keepaspectratio]{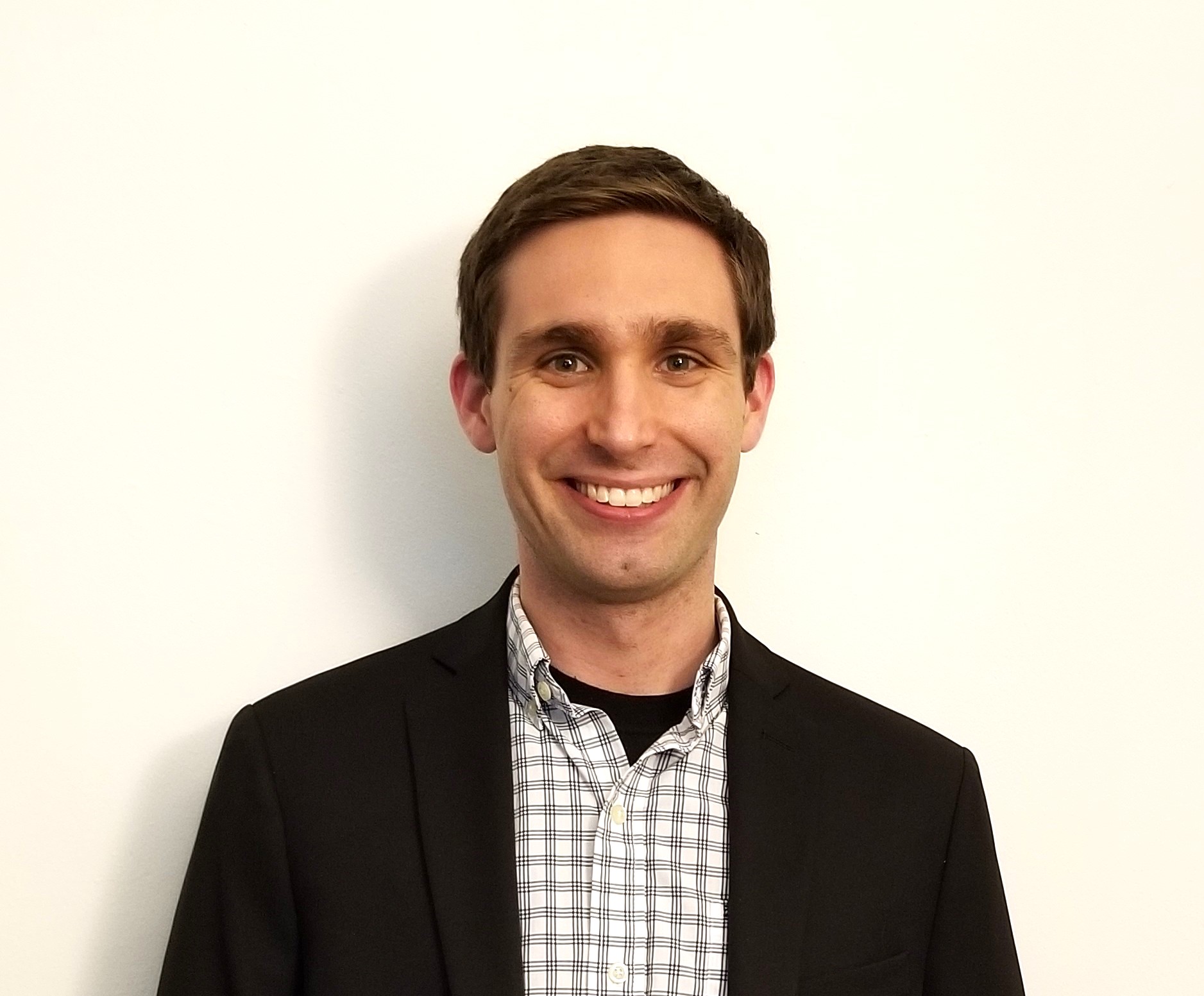}}]{Charles E. Thornton (M `23)}
	is a Research Assistant Professor with Virginia Tech's National Security Institute. He studied at Johns Hopkins University (BS, 2018) and Virginia Tech (PhD, 2023). His research interests include machine learning, information theory, and signal processing. He has made both technical and expository contributions to the literature on cognitive radar. 
\end{IEEEbiography}

\begin{IEEEbiography}[{\includegraphics[width=1in,height=1.25in,clip,keepaspectratio]{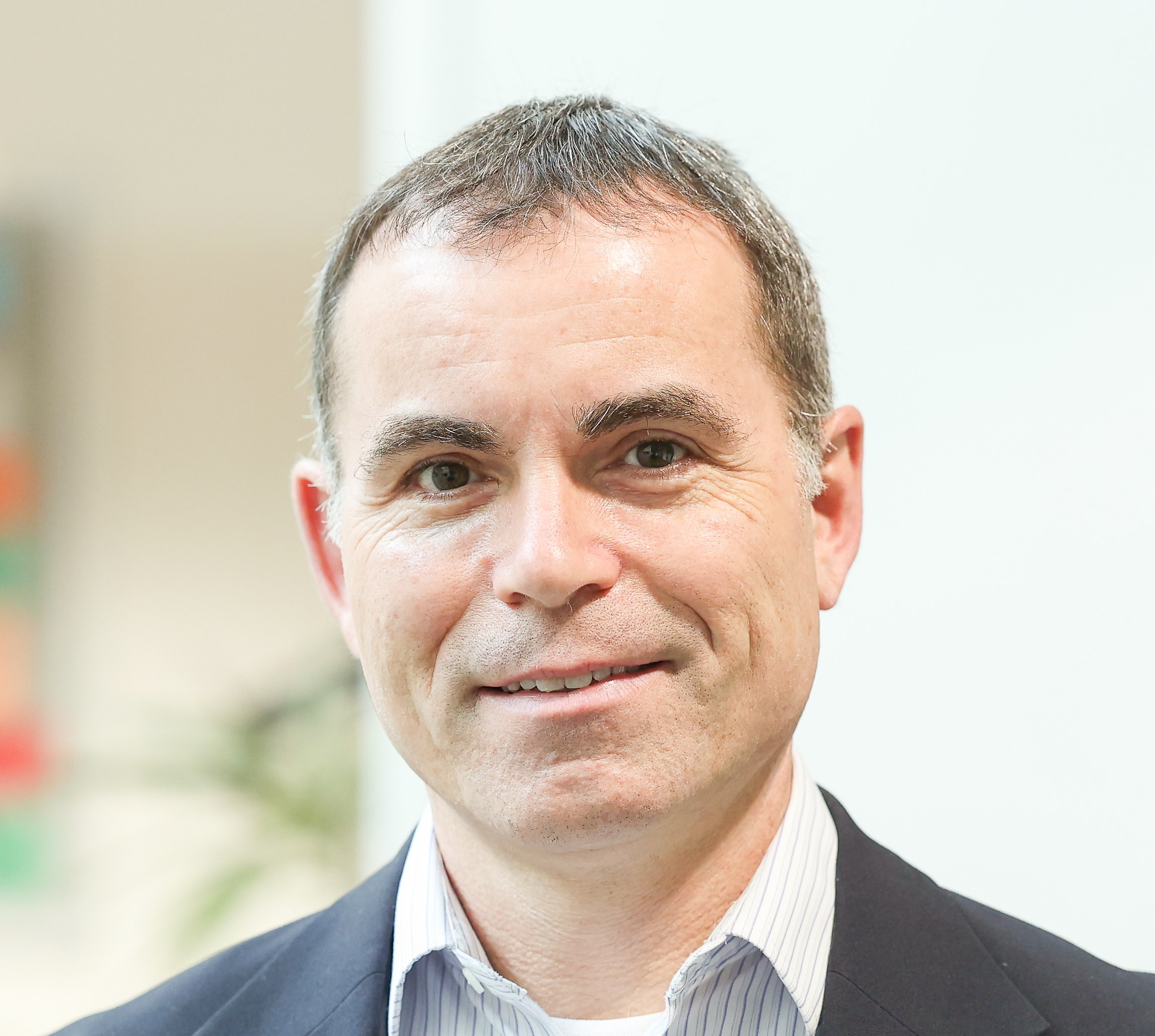}}]{R. Michael Buehrer (F `16)}
	joined Virginia Tech from Bell Labs as an Assistant Professor with the Bradley Department of Electrical and Computer Engineering in 2001. He is currently a Professor of Electrical Engineering and is the director of Wireless @ Virginia Tech, a comprehensive research group focusing on wireless communications, radar and localization. During 2009 Dr. Buehrer was a visiting researcher at the Laboratory for Telecommunication Sciences (LTS) a federal research lab which focuses on telecommunication challenges for national defense. While at LTS, his research focus was in the area of cognitive radio with a particular emphasis on statistical learning techniques. 
	
	Dr. Buehrer was named an IEEE Fellow in 2016 “for contributions to wideband signal processing in communications and geolocation.” His current research interests include machine learning for wireless communications and radar, geolocation, position location networks, cognitive radio, cognitive radar, electronic warfare, dynamic spectrum sharing, communication theory, Multiple Input Multiple Output (MIMO) communications, spread spectrum, interference avoidance, and propagation modeling. His work has been funded by the National Science Foundation, the Defense Advanced Research Projects Agency, the Office of Naval Research, the Army Research Office, the Air Force Research Lab and several industrial sponsors.
	
	Dr. Buehrer has authored or co-authored over 90 journal and approximately 260 conference papers and holds 18 patents in the area of wireless communications. In 2021 he was the co-recipient of the Vanu Bose Award for the best paper at MILCOM’21. In 2010 he was co-recipient of the Fred W. Ellersick MILCOM Award for the best paper in the unclassified technical program. He was formerly an Area Editor IEEE Wireless Communications. He was also formerly an associate editor for IEEE Transactions on Communications, IEEE Transactions on Vehicular Technologies, IEEE Transactions on Wireless Communications, IEEE Transactions on Signal Processing, IEEE Wireless Communications Letters, and IEEE Transactions on Education. He has also served as a guest editor for special issues of The Proceedings of the IEEE, and IEEE Transactions on Special Topics in Signal Processing. In 2003 he was named Outstanding New Assistant Professor by the Virginia Tech College of Engineering and in 2014 he received the Dean’s Award for Excellence in Teaching.
\end{IEEEbiography}

\bibliographystyle{IEEEtran}
\bibliography{valueBib}{}

\end{document}